\title{When do homomorphism counts help in query algorithms?}
\author{Balder ten Cate}{University of Amsterdam}{b.d.tencate@uva.nl}{https://orcid.org/0000-0002-2538-5846}{Supported by the European Union’s Horizon 2020 research and innovation programme (MSCA-101031081).}
\author{Victor Dalmau}{Universitat Pompeu Fabra}{victor.dalmau@upf.edu}{https://orcid.org/0000-0002-9365-7372}{Supported by the MiCin under grants PID2019-109137GB-C22 and PID2022-138506NB-C22, and the Maria de Maeztu program (CEX2021-001195-M).}
\author{Phokion G.\ Kolaitis}{University of California Santa Cruz and IBM Research }{kolaitis@ucsc.edu}{https://orcid.org/0000-0002-8407-8563}{Partially supported by NSF Grant IIS-1814152.}
\author{Wei-Lin Wu}{University of California Santa Cruz}{wwu53@ucsc.edu}{https://orcid.org/0009-0004-3341-1508}{}
\authorrunning{ten Cate, Dalmau, Kolaitis, Wu}
\keywords{query algorithms, homomorphism, homomorphism counts, conjunctive query, constraint satisfaction}
\newcommand{\mathmode}[1]{\begin{math}#1\end{math}}
\newcommand{\defas}{\colonequals} %defined as
\newcommand{\etc}{\cdots} %et cetera, with centered dots
\newcommand{\etl}{\ldots} %et cetera, with lower dots
\newcommand{\suchthat}{:} %such that
\newcommand{\set}{\mathcal} %typeface of set
\newcommand{\sete}[1]{\{#1\}} %set by enumeration
\newcommand{\setm}[2]{\{#1 \suchthat #2\}} %set with description (2nd argument) in math mode
\newcommand{\sett}[2]{\{#1 \suchthat \text{#2}\}} %set with description (2nd argument) in text mode
\newcommand{\size}[1]{\left|#1\right|} %size, cardinality
\newcommand{\union}{\mathbin{\cup}} %set union
\newcommand{\intsc}{\mathbin{\cap}} %set intersection
\newcommand{\bunion}{\mathop{\bigcup}} %big set union
\newcommand{\cartpwr}[2]{#1^{#2}} %cartesian power
\newcommand{\bool}{\mathbb{B}} %Boolean semiring
\newcommand{\nat}{\mathbb{N}} %bag-set semiring
\newcommand{\adom}{\mathrm{adom}} %active domain
\newcommand{\dsum}{\mathbin{\oplus}} %direct sum of two instances
\newcommand{\dprod}{\mathbin{\otimes}} %direct product of two instances
\newcommand{\inc}{\mathrm{inc}} %incidence multigraph (of an instance)
\newcommand{\block}{\mathrm{block}} %block of the incidence multigraph of an instance
\newcommand{\homto}{\mathrel{\rightarrow}} %admits a homomorphism to; homomorphically maps to
\newcommand{\homeq}{\mathrel{\leftrightarrow}} %homomorphically equivalent
\newcommand{\homtype}[1]{\left[#1\right]_{\homeq}} %homomorphism type; class of instances homomorphically equivalent to #1
\newcommand{\sur}{\mathrm{sur}} %number of surjective homomorphisms
\newcommand{\class}{\mathcal} %class typeface
\newcommand{\satis}{\models} %satisfies
\newcommand{\blor}{\bigvee} %big disjunction
\newcommand{\bland}{\bigwedge} %big conjunction
\newcommand{\modclass}{\mathrm{Mod}} %class of models
\newcommand{\canq}[1]{\mathit{q}^{#1}} %canonical query for the instance #1
\newcommand{\csp}{\mathrm{CSP}} %class of yes-instances of constraint satisfaction problem (with an instance as template)
\theoremstyle{definition}
\newtheorem{question}[theorem]{Question}
\theoremstyle{definition}
\newtheorem{Open Problem}[theorem]{Open Problem}
\begin{document}
\maketitle
\begin{abstract}
A query algorithm based on homomorphism counts is a procedure for determining whether a given instance satisfies a property  by counting homomorphisms between the given instance and finitely many predetermined instances. In a left query algorithm, we count homomorphisms from the predetermined instances to the given instance, while in a right query algorithm we count homomorphisms from the given instance to the predetermined instances. 
Homomorphisms are usually counted over the semiring $\nat$ of non-negative integers; it is also meaningful, however, to count homomorphisms over the Boolean semiring $\bool$, 
 in which case the homomorphism count  indicates whether or not a homomorphism exists. 
 We first characterize the properties that admit a left query algorithm over $\bool$ by showing that these are precisely the properties that are both   first-order definable and closed under homomorphic equivalence.
 After this, we turn attention to a comparison between left query algorithms over $\bool$ and left query algorithms over $\nat$. In general, there are properties that admit a left query algorithm over $\nat$ but not over $\bool$. The main result of this paper asserts that if a property is closed under homomorphic equivalence, then that property admits a left query algorithm over $\bool$ if and only if it admits a left query algorithm over $\nat$. In other words and rather surprisingly, homomorphism counts over $\nat$ do \emph{not} help as regards properties that are closed under homomorphic equivalence. Finally, we characterize the properties that admit both a left query algorithm over $\bool$ and a right query algorithm over $\bool$.
\end{abstract}

\section{Introduction}\label{sec:intro}
Consider a scenario in which we 
are interested in a certain property  of database instances and we wish to find out whether or not a given instance $A$ satisfies that property by asking finitely many predetermined queries  against $A$.  Naturally, which 
properties can be checked in this way depends on what
kind of queries we are allowed to ask. For example,
if we are restricted to using Boolean conjunctive queries and  evaluating them
under set semantics, then  only 
properties that are invariant under homomorphic
equivalence stand a chance to be checked in this way. In particular, this means that we cannot
test in this way whether a  relation in a  given instance contains precisely 5 tuples.
In contrast, if we are also allowed to ask for the
number of homomorphisms from a given conjunctive query
to $A$ (a feature that is supported by actual  database management systems), then 
we can find out much more about the instance $A$, including whether
it satisfies 
the aforementioned property.

In this paper, we embark on a systematic study of this scenario. As usual, by a \emph{property} of  instances we mean a class $\mathcal C$ of  instances closed under isomorphisms. We write $\bool$ for the Boolean semiring with $\lor$ and $\land$ as its operations, while we write $\nat$ for the semiring of the non-negative integers with $+$ and $\times$ as its operations.
Formally, we say that a 
class $\class{C}$ of instances admits a  \emph{left query algorithm over $\bool$}  if there exists a finite set $\class{F}=\{F_1, \ldots, F_k\}$ 
of instances such that the membership in $\class{C}$  of an arbitrary
instance $A$  is completely determined
by $\hom_\bool(\class{F},A)$, where $\hom_\bool(\class{F},A)$
is the $k$-tuple of Boolean values indicating for each $i\leq k$
whether or not there is a homomorphism $F_i\to A$. Similarly,
we say that a 
class $\class{C}$ of instances admits a \emph{left query algorithm over $\nat$} if there exists a finite
set  $\class{F}=\{F_1, \ldots, F_k\}$ of instances   such that the membership in $\class{C}$  of an arbitrary
instance $A$ is completely determined
by $\hom_\nat(\class{F},A)$, where $\hom_\nat(\class{F},A)$
is the $k$-tuple of non-negative integers indicating for each $i\leq k$
how many homomorphisms from $F_i$ to $A$ there are. \emph{Right query algorithms}
over $\bool$ and over $\nat$ are defined
 similarly, except that we now count the homomorphisms from $A$ to each $F_i$ instead of the
homomorphisms from each $F_i$ to $A$.

Assume that the class $\class{C}$ of instances under consideration  admits a left query algorithm over $\bool$ or over $\nat$ using the set $\class{F}=\{F_1, \ldots, F_k\}$. Let $q^{F_i}$ be the canonical conjunctive query associated with the instance $F_i$, $1\leq i\leq k$.
Then $\hom_\bool(\class{F},A)$ is the $k$-tuple of Boolean values denoting whether  $q^{F_i}(A)=1$ or $q^{F_i}(A)=0$, i.e., whether  $q^{F_i}$ is  true   or false on $A$ under set semantics. Similarly, 
$\hom_\nat(\class{F},A)$ is the $k$-tuple of non-negative integers that are the answers to $q^{F_i}$ on $A$ under bag-set semantics. Thus, intuitively,
a class $\class{C}$ admits a left query algorithm over
$\bool$ or over $\nat$ if membership in $\class{C}$ is answerable by evaluating finitely many Boolean conjunctive queries under set semantics or under bag-set semantics, respectively. Observe  that  these are data complexity notions because the queries $q^{F_1}, \ldots,q^{F_k}$ are fixed while the instance $A$ varies. Observe also that these two notions differ in expressive power: for example, if $\class{C}$ is the class of instances in which a particular relation $R$ has precisely 5 tuples, then $\class{C}$ admits a left query algorithm over $\nat$, but not over $\bool$.

There are two pieces of earlier work (each with different motivation and results) where the notion of a left query algorithm or variants of this notion have been explored.  First, Bielecki and Van den Bussche \cite{DBLP:conf/icdt/BieleckiB03}  defined what it means for a query $p$ to be \emph{derivable} through interrogation with a query language $L$ using a \emph{database independent strategy},
where the interrogation consists of asking  the cardinality $|q(A)|$ for finitely many queries $q\in L$. When $L$ is the language of conjunctive queries with no existential quantifiers, such strategies correspond to
left-query algorithms over $\mathbb N$;
whereas, when $L$ is the language of 
Boolean conjunctive queries, they correspond to left-query algorithms over $\mathbb B$. Second, when the instances are unordered graphs,
the concept of a left query algorithm over $\nat$ was
studied by
Chen et al.~\cite{chen2022algorithms} under the name 
\emph{non-adaptive query algorithm}; note that Chen et al.~\cite{chen2022algorithms} were apparently unaware of the work by Bielecki and Van den Bussche \cite{DBLP:conf/icdt/BieleckiB03}.
The term ``non-adaptive'' is apt as it conveys that  the instances $F_i$ (or the associated conjunctive queries $q^{F_i}$), $1\leq i \leq k$, in the set $\mathcal F$ depend only on the class $\class{C}$ and do not change during a run of the query algorithm. It is also natural to consider \emph{adaptive query algorithms},
where the instances $F_i$, $1\leq i \leq k$, are not required
to be fixed  up front. In fact, such adaptive notions were explored in both
\cite{DBLP:conf/icdt/BieleckiB03} and \cite{chen2022algorithms}. In particular, 
Chen et al.~\cite{chen2022algorithms}  showed that  adaptive left query algorithms over $\nat$ are more powerful than  non-adaptive ones over $\nat$. 
It is easy to see, however, that the existence of an adaptive left query algorithm over $\bool$ implies the existence of a 
non-adaptive one over $\bool$. In this sense, adaptive left query algorithms over $\bool$ are not more powerful than non-adaptive ones over $\bool$.  Also, since in this paper we study non-adaptive algorithms only (but over both $\bool$ and $\nat$), we will not use the adjective ``non-adaptive'' here.  

Our investigation begins by 
focusing on left query algorithms over $\bool$. 
It is easy to see that a class of instances admits a left query algorithm over $\bool$ if and only if it is definable by a Boolean combination of conjunctive queries.
Using tools developed by Rossman \cite{rossman2008homomorphism}
to prove the preservation-under-homomorphisms theorem in the finite, we obtain a deeper  characterization of such classes  by showing that a class of instances admits a left query algorithm over $\bool$
if and only if it is both first-order definable and closed under homomorphic equivalence. Clearly, if a class of instances is closed under homomorphic equivalence, then it is a (possibly infinite) union of homomorphic equivalence classes.
 We show that if $\class{C}$ is a finite union of homomorphic-equivalence classes, then $\class{C}$ admits a left query algorithm over $\bool$ if and only if every homomorphic-equivalence class in that union admits a left query algorithm over $\bool$. In contrast, a similar result does not hold for arbitrary infinite unions.

After this, we turn attention to a comparison between left query algorithms over $\bool$ and left query algorithms over $\nat$. As discussed earlier, left query algorithms over $\nat$ are more powerful than left query algorithms over $\bool$. The intuitive reason is that left query algorithms over $\bool$ do not differentiate between homomorphically equivalent instances, while those over $\nat$ do. 
The main (and technically more challenging) result of this paper reveals that, in a precise sense, this is the \emph{only} reason why left query algorithms over $\nat$ are more powerful than left query algorithms over $\bool$.
More precisely, our main theorem
asserts that if a class $\class{C}$ is closed under homomorphic equivalence, then 
$\class{C}$
admits a left query algorithm over $\bool$ if and only if $\class{C}$ admits a left query algorithm over $\nat$. In other words and rather surprisingly, homomorphism counts over $\nat$ do \emph{not} help as regards properties that are closed under homomorphic equivalence. 
As an immediate consequence, a constraint satisfaction problem  has a left query algorithm over $\nat$ if and only if this problem is first-order definable.

Finally, we characterize the properties that admit both a left query algorithm over $\bool$ and a right query algorithm over $\bool$. In particular, we show that a class $\class{C}$ of instances admits both a left query algorithm over $\bool$ and a right query algorithm over $\bool$ if and only if $\class{C}$ is definable by a Boolean combination of Berge-acyclic conjunctive queries.
To see the point of this result, recall that if a class admits a left query algorithm over $\bool$, then it is definable by a Boolean combination of conjunctive queries. Thus, if the class admits also a right query algorithm over $\bool$, then these conjunctive queries  can be taken to be Berge-acyclic.

\smallskip

\noindent{\bf Related work}~ 
A classical result by
Lov\'asz~\cite{lovasz1967operations} characterizes graph isomorphism in terms of ``left'' homomorphism counts: two graphs $G$ and $H$ are isomorphic if and only if for every graph $F$, the number of homomorphisms from $F$ to $G$ is equal to the number of homomorphisms from $F$ to $H$. In more recent years, there has been a study of relaxations of isomorphisms obtained by requiring that the number of homomorphisms from $F$ to $G$ is equal to the number of homomorphisms from $F$ to $H$, where $F$ ranges over a restricted class of graphs~\cite{DBLP:journals/jgt/Dvorak10,DBLP:conf/icalp/DellGR18,DBLP:conf/mfcs/BokerCGR19}. Furthermore, a study of  relaxations of isomorphism obtained by counting the number of ``right'' homomorphisms to a restricted class was carried out in \cite{atserias2021expressive}.

There has been an extensive body of research on answering 
queries under various types of access restrictions.
Closer in spirit to the work reported here is   view determinacy, which is the question of
whether the
answers to a query can be inferred when given access
only to a certain view of the database instance~\cite{Nash2010}. 
We note that  view determinacy is typically concerned with non-Boolean queries and non-Boolean views; in contrast, 
the question of whether a class admits a left query algorithm over $\bool$ can be interpreted as the question of whether there is a finite set of Boolean conjunctive queries that determine a given Boolean query.  A study of view determinacy under bag-set semantics was recently initiated in~\cite{Kwiecien2022:determinacy}. 
Section~\ref{sec:conclus} contains additional commentary on the relationship between left query algorithms over $\nat$ and view determinacy under bag-set semantics.

\section{Basic Notions}\label{sec:prelim}
\noindent{\bf Relational database instances}~
A \emph{relational database schema} or, simply, a \emph{schema} is a finite set $\sigma = \sete{R_1, \etl, R_m}$ of relation symbols $R_i$, each of which has  a positive integer $r_i$ as its associated \emph{arity}. A \emph{relational database instance} or, simply, an \emph{instance} is a tuple $A = (R_1^{A}, \etl, R_m^{A})$, where each $R_i^{A}$  is  a relation of arity $r_i$.  The \emph{facts} of the instance $A$ are the tuples in the relations $R_i^{A}$, $1\leq i \leq m$. 
The \emph{active domain of $A$}, denoted $\adom(A)$, is the set of all entries occurring in the facts of $A$. All instances $A$ considered are assumed to be finite, i.e., $\adom(A)$ is finite.
A \emph{graph} is an instance $A$ over a schema consisting of a binary relation symbol $E$ and such that $E^A$ is a binary relation that is symmetric and irreflexive.

The \emph{incidence multigraph} $\inc(A)$ of  an instance $A$ is the  bipartite multigraph whose  parts  are the sets $\adom(A)$ and $\block(A)= \sett{(R, t)}{\mathmode{R \in \sigma} and \mathmode{t \in R^{A}}}$, and whose edges are the pairs $(a,(R,t))$ such that $a$ is one of the entries of $t$. A \emph{path of length $n$}  in $A$  is a sequence  $a_0, a_1, \etl, a_n$ of elements in $\adom(A)$ for which there are elements $b_1, \etl, b_n$ in $\block(A)$ such that the sequence $a_0, b_1, a_1, \etl, b_n, a_n$ is a path in $\inc(A)$ in the standard graph-theoretic sense (disallowing traversing an  edge twice in succession in opposite directions). Two elements $a$ and $a'$ in $\adom(A)$ are \emph{connected} if  $a = a'$ or  there is a path $a_0, a_1, \etl, a_n$ in $A$ with $a = a_0$ and $a' = a_n$. We say that $A$ is \emph{connected} if every two elements $a$ and $a'$ in $\adom(A)$ are connected. A \emph{cycle of length $n$} in $A$ is a path of length $n$ in $A$  with  $a_n = a_0$. We say $A$ is \emph{acyclic} if it contains no cycles. The \emph{girth of $A$} is   the shortest length of a cycle in $A$ or $\infty$ if $A$ is acyclic.
  \looseness=-1

An instance  $A$ is a \emph{subinstance} of an instance $B$ if
$R^A \subseteq R^B$, for every $R \in \sigma$.

A \emph{class} $\mathcal C$ of instances is a collection  of instances over the same schema that is closed under isomorphism (i.e., if $A \in \mathcal{C}$ and $B$ is isomorphic to $A$, then $B\in \mathcal{C}$). Every decision problem $P$ about instances can be identified with the class of all ``yes'' instances of $P$.

\smallskip

\noindent{\bf Homomorphisms, conjunctive queries, and canonical instances}
A \emph{homomorphism} from an instance $A$ to an instance $B$ is a function  $h: \adom(A) \to \adom(B)$ such that for every relation symbol $R \in \sigma$ with arity $r$ and for all 
  elements $a_1, \etl, a_r$ in $\adom(A)$ with $(a_1, \etl, a_r) \in R^A$,
 we have  $(h(a_1), \etl, h(a_r)) \in R^B$.  We write $h:A \rightarrow B$ to denote that $h$ is a homomorphism from $A$ to $B$; we also write $A\rightarrow B$ to denote that there is a homomorphism from $A$ to $B$. We say that $A$ and $B$ are \emph{homomorphically equivalent}, denoted  $A \homeq B$,  if  $A \homto B$ and $B \homto A$. Clearly, $\homeq$ is an equivalence relation on instances. We write  $\homtype{A}$ to denote the equivalence class of $A$ with respect to $\homeq$, i.e., $\homtype{A} = \setm{B}{B \homeq A}$.

Let $\mathcal C$ be a class of instances. We say that $\mathcal C$ is \emph{closed under homomorphic equivalence} if whenever $A\in {\mathcal C}$ and $A\homeq B$, we have that $B\in {\mathcal C}$. As an example, for every instance $A$, we have that the equivalence class $\homtype{A}$ is closed under homomorphic equivalence. For a different example, the class of all $3$-colorable graphs is closed under homomorphic equivalence.

We assume familiarity with the syntax and the semantics of first-order logic (FO). For a FO-sentence $\varphi$, we denote
by $\modclass(\varphi)$ the set $\setm{A}{A \satis \varphi}$ of instances $A$ that satisfy $\varphi$ under the active domain semantics (i.e., the quantifiers range over elements of the active domain of the instance at hand). A \emph{Boolean conjunctive query} (Boolean CQ) is  a FO-sentence of the form $\exists x_1 \etl x_n (\alpha_1 \land \etc \land \alpha_k)$, where each $\alpha_j$ is an atomic formula of the form $R(y_1, \etl, y_r)$, each variable $y_i$ is among the variables $x_1,\ldots, x_n$, and each variable $x_i$ occurs in at least one of the atomic formulas $\alpha_1,\ldots,\alpha_k$.

The \emph{canonical instance} of a conjunctive query $q$, denoted $A^q$, is the instance whose active domain consists of the variables of $q$, and whose facts are the conjuncts of $q$. Conversely, the \emph{canonical conjunctive query} of an instance $A$, denoted $q^A$, has, for each $a$ in $\adom(A)$, an existentially quantified variable $x_a$ and, for each fact $(a_1, \etl, a_r) \in R^{A}$, a conjunct $R(x_{a_1}, \ldots, x_{a_r})$. An immediate consequence of the semantics of FO is that, for every two instances $A$ and $D$,
we have that $D\models q^A$ if and only if $A\rightarrow D$.

A conjunctive query is 
\emph{Berge-acyclic} if its canonical
instance is acyclic, as defined earlier.   The notion of Berge-acyclicity is stronger than the more standard notion of acyclicity in databse theory,  which requires that the conjunctive query has a join tree (see, e.g., \cite{abiteboul1995foundations}).

\smallskip

\noindent{\bf Homomorphism counts, left and right profiles} Let $\nat=(N,+,\times, 0,1)$ be the semiring of the non-negative integers and let $\bool=(\{0,1\}, \lor, \land, 0,1)$ be the Boolean semiring.
If $A$ and $B$ are two instances, then we write  $\hom_\nat(A, B)$ for the number of homomorphisms from $A$ to $B$. 
For example, if $A$ is a graph and $K_3$ is the $3$-clique, then $\hom_\nat(A,K_3)$ is the number of $3$-colorings of $A$. 
We extend this notion to $\hom_\bool(A, B)$, where $\hom_\bool(A, B) = 1$ if there is a homomorphism from $A$ to $B$, and $\hom_\bool(A, B) = 0$ otherwise. For example, $\hom_\bool(A,K_3)=1$ if $A$ is $3$-colorable, and $\hom_\bool(A,K_3)=0$ if $A$ is not $3$-colorable.

Let $\set{F} = \sete{F_1, \etl, F_k}$ be a finite non-empty set of instances and let $A$ be an instance.
\begin{itemize}
    \item 
 The \emph{left profile of $A$ in $\set{F}$ over $\nat$} is the tuple
    
  \centerline{  $\hom_\nat(\set{F}, A) = (\hom_\nat(F_1, A), \etl, \hom_\nat(F_k, A)).$}
\item The  \emph{left profile of $A$ in $\set{F}$ over $\bool$} is the tuple
    
    \centerline{$\hom_\bool(\set{F}, A) = (\hom_\bool(F_1, A), \etl, \hom_\bool(F_k, A))$.}
\item The \emph{right profile of $A$ in $\set{F}$ over $\nat$} is the tuple

\centerline{$\hom_\nat(A, \set{F}) = (\hom_\nat(A, F_1), \etl, \hom_\nat(A, F_k))$.}
\item The \emph{right profile of $A$ in $\set{F}$ over $\bool$} is the tuple

\centerline{$\hom_\bool(A, \set{F}) = (\hom_\bool(A, F_1), \etl, \hom_\bool(A, F_k))$.}
\end{itemize}

Let $A_1, \etl, A_n$ be instances whose active domains are pairwise disjoint.
\begin{itemize}
    \item 
The \emph{direct sum} $A_1 \dsum \etc \dsum A_n$ of $A_1, \etl, A_n$ is the instance such that
$R^{A_1 \dsum \etc \dsum A_n}=   R^{A_1} \union \etc \union R^{A_n}$,
 for every $R \in \sigma$. 
\item 
The \emph{direct product} $A_1 \dprod \etc \dprod A_n$ of $A_1, \etl, A_n$ is the instance such that the relation $R^{A_1 \dprod \etc \dprod A_n}$  consists of all tuples $(\mathbf{a}_1, \etl, \mathbf{a}_r)$ with $(\mathbf{a}_1(i), \etl, \mathbf{a}_r(i)) \in R^{A_i}$, for $1 \leq i \leq n$ and for every $R \in \sigma$ of arity $r$.
\end{itemize}

The next proposition is well known and has a straightforward proof.
\begin{proposition}\label{prop:hom-combinatorics}
Let $A,B_1,B_2$ be instances,
and let $K\in\{\bool,\nat\}$. Then the following statements are true.
\begin{enumerate}
\item
    $\hom_K(A,B_1\oplus B_2) = \hom_K(A,B_1)+_K\hom_K(A,B_2)$,
    provided that $A$ is connected;
\item
    $\hom_K(A,B_1\otimes B_2) = \hom_K(A,B_1)\cdot_K\hom_K(A,B_2)$;
\item
    $\hom_K(B_1\oplus B_2,A) = \hom_K(B_1,A)\cdot_K\hom_K(B_2,A)$,
\end{enumerate}
where  $+_\nat$ and $\cdot_\nat$ stand for  addition $+$  and multiplication $\times$ of non-negative integers, while
 $+_\bool$ and $\cdot_\bool$ stand for  disjunction $\lor$ 
and conjunction  $\land$ of  the Boolean values $0$ and $1$.
\end{proposition}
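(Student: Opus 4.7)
The plan is to prove the three parts by exhibiting natural bijections (for $K=\nat$) or equivalences (for $K=\bool$) between the relevant sets of homomorphisms, and then appealing to the definitions of $+_K$ and $\cdot_K$. In each case, the $\bool$ version follows from the $\nat$ version by applying the observation that $\hom_\bool(X,Y)=1$ iff the set of homomorphisms from $X$ to $Y$ is non-empty, so it suffices to set up the underlying bijections carefully.

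For part (1), the key lemma is that if $A$ is connected, then the image of any homomorphism $h\colon A\to B_1\dsum B_2$ is contained in $\adom(B_1)$ or in $\adom(B_2)$. To see this, note that a homomorphism maps paths in $A$ to paths in the target, and since $\adom(B_1)$ and $\adom(B_2)$ are disjoint and no fact of $B_1\dsum B_2$ involves elements from both, every two elements in $\adom(h(A))$ must lie in the same side of the sum. Hence the set of homomorphisms $A\to B_1\dsum B_2$ is the disjoint union of the set of homomorphisms $A\to B_1$ and the set of homomorphisms $A\to B_2$, which yields the identity over both $\nat$ (sum of cardinalities) and $\bool$ (disjunction of existence).

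For part (2), the proof is by a bijection between homomorphisms $h\colon A\to B_1\dprod B_2$ and pairs $(h_1,h_2)$ with $h_i\colon A\to B_i$, given by $h(a)=(h_1(a),h_2(a))$ and, conversely, by the two coordinate projections. The definition of the relations in $B_1\dprod B_2$ ensures that $h$ is a homomorphism iff both $h_1$ and $h_2$ are. This gives multiplicativity over $\nat$, and for $\bool$ it specializes to the equivalence ``there is a homomorphism $A\to B_1\dprod B_2$ iff there is one to each $B_i$''. For part (3), since $\adom(B_1)$ and $\adom(B_2)$ are disjoint, a function $h\colon \adom(B_1)\union \adom(B_2)\to \adom(A)$ is the same thing as a pair $(h_1,h_2)$ of functions $h_i\colon\adom(B_i)\to\adom(A)$, and because $R^{B_1\dsum B_2}=R^{B_1}\union R^{B_2}$, such an $h$ is a homomorphism iff both $h_1$ and $h_2$ are; again this yields multiplicativity over $\nat$ and the conjunctive form over $\bool$.

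There is no real obstacle here; the only point that requires a small argument, rather than a one-line unpacking of definitions, is the connectedness lemma used in part (1), and even there the argument is immediate from the definition of path and the disjointness of the two sides of the direct sum.
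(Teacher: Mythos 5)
Your proof is correct and is exactly the standard bijection argument the paper has in mind when it declares the proposition ``well known'' with ``a straightforward proof'' and gives no details: part~(1) via the connectedness lemma that forces the image of $h$ into one summand, part~(2) via the projection bijection $h\mapsto(\pi_1\circ h,\pi_2\circ h)$, and part~(3) via the restriction/pairing bijection. The one degenerate case worth noting is $\adom(A)=\emptyset$: under the paper's definition such an $A$ is vacuously connected, yet part~(1) then fails over $\nat$ because the unique empty homomorphism into $B_1\dsum B_2$ is counted once on the left and twice on the right, so the disjoint-union claim in your argument silently requires $\adom(A)\neq\emptyset$; this is harmless here (and is the usual convention), but it is the one spot where the proof is not literally ``a one-line unpacking of definitions''.
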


\section{Left Query Algorithms and Right Query Algorithms}\label{sec:left_and_right_alg_bool}

In \cite{chen2022algorithms}, Chen et al.\  focused on  classes of graphs and introduced the notions of a left query algorithm and a right query algorithm over the semiring $\nat$ of the non-negative integers.  
Here, we extend their framework in two ways: first, we consider classes of instances over some arbitrary, but fixed, schema; second, we consider left query algorithms and right query algorithms over the Boolean semiring $\bool$.

\begin{definition}\label{def:query_alg}
Let $\class{C}$ be a class of instances and let  $K$ be the semiring $\bool$ or $\nat$.
\begin{itemize}
\item Assume that $k$ is a positive integer. 
 \begin{itemize}
     \item 
    A \emph{left $k$-query algorithm over $K$ for $\class{C}$}  is a pair
$(\set{F},X)$, where $\set{F} = \sete{F_1, \etl, F_k}$ is a set of instances and $X$ is a set of $k$-tuples over $K$, such that
for all instances $D$, we have that $D \in \class{C}$ if and only if $\hom_K(\set{F}, D) \in X$. 
\item 
 A \emph{right $k$-query algorithm over $K$ for $\class{C}$}  is a pair
$(\set{F},X)$, where $\set{F} = \sete{F_1, \etl, F_k}$ is a set of instances and $X$ is a set of $k$-tuples over $K$, such that
for all instances $D$, we have that $D \in \class{C}$ if and only if 
$\hom_K(D, \set{F}) \in X$.     
\end{itemize}
\item We say that $\class{C}$ \emph{admits a left query algorithm over} $K$ if for some $k>0$, there is a left $k$-query algorithm over $K$ for $\class{C}$.
Similarly, we say that $\class{C}$ \emph{admits a right query algorithm over} $K$ if for some $k>0$, there is a right $k$-query algorithm over $K$ for $\class{C}$.
\end{itemize} 
\end{definition}

The term ``query algorithm'' is natural because
we can think of a query algorithm as a procedure for determining
if a given instance belongs to the class $\class{C}$:
we  compute the left homomorphism-count vector (in the case of a left query algorithm)  or the right homomorphism-count vector (in the case of a right query algorithm) 
and test whether it belongs to $X$. When the semiring $\nat$ is considered,
  this is a somewhat abstract notion of an  algorithm   because it makes no requirements on the effectiveness of the set $X$.
  Not requiring $X$ to be a decidable set makes our results regarding the non-existence of 
  left query algorithms over $\nat$ stronger. 
 Moreover, in all cases
where we establish the existence of a left query algorithm over $\nat$ or a right query algorithm over $\nat$,  the set $X$ will happen to be decidable (for the semiring $\bool$, the set $X$ is always finite, hence decidable).  

Let $K$ be the semiring $\bool$ or $\nat$. 
It is clear that if two classes of instances admit a left query algorithm over $K$, then so do their complements, their union, and their intersection. Consequently, the classes of instances that admit a left query algorithm over $K$ are closed under Boolean combinations. Furthermore, the same holds true for right query algorithms.

By Part 3 of Proposition 
\ref{prop:hom-combinatorics}, 
we have that if $K$ is the semiring $\bool$ or the semiring $\nat$, then  $\hom_K$ is \emph{multiplicative on the left}, i.e., for all instances $A_1, \etl, A_n$ and $B$, we have $\hom_K(A_1 \dsum \etc \dsum A_n, B) = \hom_K(A_1, B) \cdot_K\ \etc \cdot_K\ \hom_K(A_n, B)$. It follows that, as regards the existence of left query algorithms, we may assume that all instances in the finite set $\class{F}$ of a left query algorithm over $\bool$ or over $\nat$ are connected. We state this observation as a proposition that will be used repeatedly in the sequel.

\begin{proposition} \label{prop:connected}
Let $\class{C}$ be a class of instances and let $K$ be the semiring $\bool$ or $\nat$. Then the following statements are equivalent.
\begin{enumerate}
    \item $\class{C}$ admits a left query algorithm $(\set{F},X)$ over $K$.
    \item $\class{C}$ admits a left query algorithm $(\set{F},X)$ over $K$, where every instance in the set $\set{F}$ is connected.
\end{enumerate}
\end{proposition}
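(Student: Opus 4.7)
The plan is to use Part 3 of Proposition \ref{prop:hom-combinatorics}, which states that $\hom_K(B_1\dsum B_2,A) = \hom_K(B_1,A)\cdot_K \hom_K(B_2,A)$ for both $K=\bool$ and $K=\nat$. The direction (2) $\Rightarrow$ (1) is immediate from the definitions, so the content is in (1) $\Rightarrow$ (2): given a left query algorithm whose instances may have several connected components, I want to replace it with one whose instances are the connected components themselves, adjusting the acceptance set $X$ accordingly.

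Concretely, suppose $(\set{F},X)$ with $\set{F}=\{F_1,\etl,F_k\}$ is a left query algorithm over $K$ for $\class{C}$. For each $i\leq k$, decompose $F_i$ into its connected components, so that $F_i$ is isomorphic to $F_{i,1}\dsum\etc\dsum F_{i,m_i}$. Let $\set{F}' = \{G_1,\etl,G_m\}$ be an enumeration of the (disjoint union, or equivalently the collection) of all these connected components, and for each $i\leq k$ fix a multiset $S_i\subseteq\{1,\etl,m\}$ such that $F_i \cong \bigoplus_{j\in S_i} G_j$. Iterating Part 3 of Proposition \ref{prop:hom-combinatorics}, for every instance $D$ and every $i\leq k$ we have
\[
\hom_K(F_i,D) \;=\; \prod_{j\in S_i}{}_K\, \hom_K(G_j, D),
\]
where the product is taken in $K$ (arithmetic product in $\nat$, conjunction in $\bool$).

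Now define $X' \subseteq K^m$ to consist of all tuples $(v_1,\etl,v_m)\in K^m$ such that $\bigl(\prod_{j\in S_1}{}_K v_j,\;\etl,\;\prod_{j\in S_k}{}_K v_j\bigr) \in X$. By the displayed identity, $\hom_K(\set{F},D)\in X$ if and only if $\hom_K(\set{F}',D) \in X'$. Since $(\set{F},X)$ is a left query algorithm over $K$ for $\class{C}$, so is $(\set{F}',X')$, and every instance in $\set{F}'$ is connected by construction. This yields statement~(2).

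The argument is essentially mechanical once multiplicativity on the left is in hand, so there is no real obstacle. The only mildly delicate point is that the set $X'$ in the $\nat$ case is defined in a way that is not effective in general; however, Definition~\ref{def:query_alg} imposes no decidability requirement on $X$, so this causes no issue (and indeed, when $X$ is decidable the definition of $X'$ given above is decidable as well).
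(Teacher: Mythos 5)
Your proof is correct and takes essentially the same route the paper does: the paper derives this proposition in one sentence from Part~3 of Proposition~\ref{prop:hom-combinatorics} (multiplicativity of $\hom_K$ on the left), leaving the decomposition-into-components and adjustment of $X$ implicit, and your write-up simply spells out those same details.
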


Left  profiles over $\nat$ contain more information than left  profiles over $\bool$.  Therefore, if membership in a class $\class{C}$ can be determined using left  profiles over $\bool$, then it ought to be also determined using left  profiles over $\nat$. 
Similar considerations hold for right profiles.
The next proposition makes these assertions precise.

\begin{proposition}\label{prop:query_alg_bool_implies_bag}
Let $\class{C}$ be a class of instances and let $\set{F}$ be a finite set of instances.

If   $\class{C}$ admits a left query algorithm over $\bool$ of the form $(\set{F},X)$ for some set $X$, then $\class{C}$  admits a left query algorithm over $\nat$ of the form $(\set{F},X')$
for some set $X'$. 
In particular, if $\class{C}$ admits a left query algorithm over $\bool$, then it also admits a left query algorithm over $\nat$.

Furthermore, the same holds true for right query algorithms.
\end{proposition}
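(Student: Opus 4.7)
The plan rests on a single observation: the $\nat$-valued homomorphism count determines the $\bool$-valued one, since $\hom_\bool(F,A)=1$ precisely when $\hom_\nat(F,A)>0$ (a homomorphism $F\to A$ exists iff it can be counted at least once). Letting $k=\size{\set F}$, this yields a componentwise ``support'' map $\pi_k\colon \nat^k \to \bool^k$ defined by $\pi_k(n_1,\ldots,n_k)=(b_1,\ldots,b_k)$ with $b_i=1$ iff $n_i>0$, and the observation rephrases as the identity $\hom_\bool(\set{F},A) = \pi_k(\hom_\nat(\set{F},A))$ for every instance $A$.

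\textbf{Construction.} Given a left query algorithm $(\set{F},X)$ over $\bool$ for $\class{C}$, I would take $X' \defas \pi_k^{-1}(X)\subseteq \nat^k$, i.e., the set of non-negative integer tuples whose support pattern belongs to $X$. Then for every instance $D$,
$$\hom_\nat(\set{F},D)\in X' \iff \pi_k(\hom_\nat(\set{F},D))\in X \iff \hom_\bool(\set{F},D)\in X \iff D\in \class{C},$$
where the first equivalence is by the definition of $X'$, the second by the identity above, and the third by the hypothesis on $(\set{F},X)$. Hence $(\set{F},X')$ is a left query algorithm over $\nat$ for $\class{C}$ using the same set $\set{F}$, proving the first claim; the ``in particular'' part follows immediately.

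\textbf{Right query algorithms.} The argument for right query algorithms is entirely symmetric: the identity $\hom_\bool(A,F)=1 \iff \hom_\nat(A,F)>0$ holds for the same reason, so defining $X'$ again as $\pi_k^{-1}(X)$ and replacing $\hom_K(\set{F},D)$ with $\hom_K(D,\set{F})$ throughout the displayed chain of equivalences yields a right query algorithm over $\nat$ for $\class{C}$ of the form $(\set{F},X')$. There is no genuine obstacle here; the proposition amounts to the remark that ``at least one homomorphism'' is a function of ``how many homomorphisms'', so any set $X'$ that is a union of fibers of $\pi_k$ and that projects to $X$ will do.
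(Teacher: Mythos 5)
Your proof is correct and is essentially the paper's own argument: the paper defines, for each $t\in\{0,1\}^k$, the fiber $X_t=\{(s_1,\ldots,s_k)\in\nat^k : s_i=0 \text{ iff } t_i=0\}$ and sets $X'=\bigcup_{t\in X}X_t$, which is exactly your $\pi_k^{-1}(X)$. The only difference is presentational, namely that you package the fibers into a named ``support map'' $\pi_k$ and verify the chain of equivalences explicitly.
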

\begin{proof}
 Assume that $\class{C}$ admits a left query algorithm $(\set{F}, X)$ over $\bool$, where $\set{F} = \sete{F_1, \etl, F_k}$ and $X \subseteq
 \{0,1\}^k$, for some $k>0$.
 For every  $t = (t_1, \etl, t_k) \in \{0,1\}^k$,  we let  $X_t$ be the set

\centerline{$X_t= \{(s_1, \etc, s_k)\in \nat^k: s_i=0 ~~ \mbox{if and only if} ~~ t_i=0,~\mbox{for}~1\leq i\leq k\}$.}

\noindent Consider the set $X' =  \bunion_{t \in X} X_t$. It is easy to verify that the pair $(\set{F}, X')$ is a left $k$-query algorithm for $\class{C}$ over $\nat$. The argument for right query algorithms is entirely analogous.
\end{proof}

As pointed out in the Introduction, the converse of Proposition \ref{prop:query_alg_bool_implies_bag} is not true, in general.

We now give several examples illustrating left and right query algorithms.

\begin{example}\label{exam:triangle-free}
Let $\mathcal C$ be the class of all triangle-free graphs, i.e., the graphs $G$ for which there is no homomorphism from $K_3$ to $G$. Clearly,  $\mathcal C$ admits a left $1$-query algorithm $(\set{F}, X)$ over $\bool$, where $\set{F}=\{K_3\}$ and
$X =\{0\}$ (recall that $K_3$ is the $3$-clique).
\footnote{This example, and several other examples in this paper, involve graphs.
Here, the word ``graph'' may be read as
``structure over a schema with one binary relation''. Alternatively, it may be read as
``structure over a schema with one binary relation that is symmetric and irreflexive'', but, in the latter case, we only require the query algorithm to behave correctly on such graphs, and we do not require the query algorithm to distinguish such graphs from structures whose relation is not symmetric and irreflexive. 
}
Therefore, by Proposition \ref{prop:query_alg_bool_implies_bag}, $\cal C$ admits a left $1$-query algorithm over $\nat$. In contrast, Chen et al.~\cite[Proposition~8.2]{chen2022algorithms} showed that (the complement of) $\mathcal C$ does not admit a right query algorithm over $\nat$, hence (again by Proposition \ref{prop:query_alg_bool_implies_bag}) it does not admit a right query algorithm over $\bool$.
\end{example}

We now recall the definition of
constraint satisfaction problems.

\begin{definition} \label{defn:csp}
If $B$ is an instance, then the \emph{constraint satisfaction problem} $\csp(B)$ is the following decision problem: given an instance $A$, is there a homomorphism from $A$ to $B$?
\end{definition}
For $k \geq 2$, let $K_k$ denote the $k$-clique.  Then $\csp(K_k)$ is the {\sc $k$-colorability} problem: given a graph $G$, is $G$ $k$-colorable? During the past three decades, there has been an extensive study of  complexity of constraint satisfaction problems, motivated by the Feder-Vardi Conjecture that for every instance $B$, either $\csp(B)$ is NP-complete or $\csp(B)$ is solvable in polynomial time. This conjecture was eventually confirmed independently by Bulatov~\cite{DBLP:conf/focs/Bulatov17} and Zhuk~\cite{DBLP:conf/focs/Zhuk17}.

Every constraint satisfaction problem will be identified with the class of its ``yes'' instances, that is, for every instance $B$, we have that $\csp(B) = \{A: A\rightarrow B\}$.

\begin{example} \label{exam:csp}
Let $B$ be an instance. Clearly,  $\csp(B)$ admits a right $1$-query algorithm $(\set{F}, X)$ over $\bool$, where $\set{F}=\{B\}$ and $X=\{1\}$. Therefore, by Proposition \ref{prop:query_alg_bool_implies_bag}, $\csp(B)$ admits a right $1$-query algorithm $(\set{F}, X')$ over $\nat$. In particular, the {\sc 3-colorability} problem $\csp(K_3)$ admits a right query algorithm over both $\bool$ and $\nat$. In contrast, it will follow from results in Section~\ref{sec:left_alg_bool} and Section~\ref{sec:bool_vs_bag}  that
$\csp(K_3)$ does not admit a left query algorithm over $\bool$ or over $\nat$ (see~Remark~\ref{rem:no-lqa}).
\end{example}

\begin{example} \label{exam:hom-equiv-K3}
Consider the homomorphic equivalence class $\homtype{K_3}$. Note that $\homtype{K_3}$ is the class of all graphs that are $3$-colorable and also contain a triangle.  From results in Section \ref{sec:left_alg_bool}, it will follow that $\homtype{K_3}$ does not admit a left query algorithm over $\bool$ (see~Remark~\ref{rem:no-lqa}). Furthermore, from results in Section \ref{sec:right_alg}, it will follow that $\homtype{K_3}$ does not admit a right query algorithm over $\bool$ 
(see~Remark~\ref{rem:no-rqa}).
\end{example}

\section{Left Query Algorithms over \texorpdfstring{$\bool$}{B}}\label{sec:left_alg_bool}
In this section, we investigate which classes admit a left query algorithm over $\bool$. It is easy to see that every class of instances that admits a left query algorithm over $\bool$ is closed under homomorphic equivalence. In other words, closure under homomorphic equivalence is a necessary condition for the existence of a
left query algorithm over $\bool$. The next result gives an exact characterization of the classes of instances that admit a left query algorithm over $\bool$.

\begin{restatable}{theorem}{thmcharleftalgbool}
\label{thm:char-left-alg-bool}
Let $\class{C}$ be a class of instances. Then the following statements are equivalent.
\begin{enumerate}
\item $\class{C}$ admits a left query algorithm over $\bool$.
\item $\class{C}$ is definable by a Boolean combination of CQs.
\item $\class{C}$ is FO-definable and closed under homomorphic equivalence.
\end{enumerate}
\end{restatable}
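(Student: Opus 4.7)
The plan is to establish the cycle $(1)\Rightarrow(2)\Rightarrow(3)\Rightarrow(1)$, with only the last implication being technically substantive. For $(1)\Leftrightarrow(2)$, I would use the fact that $\hom_\bool(F_i,D)=1$ iff $D\models\canq{F_i}$: given a left $k$-query algorithm $(\set{F},X)$ with $\set{F}=\sete{F_1,\etl,F_k}$, the class $\class{C}$ is defined by the Boolean combination
\[
\blor_{t\in X}\Bigl(\bland_{i:\, t_i=1}\canq{F_i}\;\land\;\bland_{i:\, t_i=0}\neg\canq{F_i}\Bigr),
\]
and conversely a Boolean combination of CQs $q_1,\etl,q_k$ is captured by taking $F_i\defas A^{q_i}$ and letting $X$ be the set of Boolean $k$-tuples on which the combination evaluates to true. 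For $(2)\Rightarrow(3)$, every CQ is FO-definable and preserved under homomorphisms (immediate from the canonical-instance/canonical-query correspondence), hence invariant under $\homeq$; both properties transfer to Boolean combinations.

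The technically demanding direction is $(3)\Rightarrow(1)$. The plan is to invoke Rossman's homomorphism-preservation theorem on finite structures --- every FO sentence preserved under $\homto$ is equivalent on finite structures to a UCQ --- together with the $\homeq$-closure of $\class{C}$. Given an FO sentence $\phi$ of quantifier rank $r$ defining $\class{C}$, I would consider the homomorphism closures
\[
\class{C}^{\rightarrow}\defas\setm{A}{\exists B\in\class{C}\text{ with }B\homto A},\qquad \class{C}^{\leftarrow}\defas\setm{A}{\exists B\in\class{C}\text{ with }A\homto B},
\]
which are respectively closed upward and downward under $\homto$. If both can be shown to be FO-definable, then Rossman's theorem, applied to $\class{C}^{\rightarrow}$ directly and to the complement of $\class{C}^{\leftarrow}$, yields that $\class{C}^{\rightarrow}$ is definable by a UCQ and $\class{C}^{\leftarrow}$ by a conjunction of negated CQs. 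Membership in $\class{C}$ can then be recovered from these presentations by using the $\homeq$-closure to argue that an instance $A$ lies in $\class{C}$ exactly when its truth values on a sufficiently rich finite collection of CQs --- namely, those extracted from the descriptions of $\class{C}^{\rightarrow}$ and $\class{C}^{\leftarrow}$ --- match those of some witness in $\class{C}$.

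The main obstacle is the step establishing that $\class{C}^{\rightarrow}$ and $\class{C}^{\leftarrow}$ (or suitable auxiliary classes playing the same role) are FO-definable and that their CQ-level descriptions together pin down $\class{C}$ exactly. Unlike in Rossman's original setting, $\phi$ itself is generally \emph{not} preserved under $\homto$, so Rossman's theorem cannot be applied to $\phi$ directly. The key ingredient I expect to need is Rossman's deeper technology of CQ-profiles with bounded quantifier rank combined with high-girth instance gadgets: the guiding lemma is that two instances that agree on all CQs of size bounded in terms of $r$ can be turned, via direct-sum and direct-product constructions, into $\homeq$-equivalent companions that additionally agree on all FO sentences of rank $r$. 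Once that is in place, FO-definability of $\class{C}$ together with $\homeq$-closure forces the original two instances to agree on membership in $\class{C}$, and the finitely many possible CQ-truth-value patterns on the distinguished collection assemble into the desired Boolean combination of CQs.
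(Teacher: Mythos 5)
Your treatment of $(1)\Leftrightarrow(2)$ and $(2)\Rightarrow(3)$ matches the paper's proof exactly. For the hard direction $(3)\Rightarrow(1)$, your final paragraph lands on essentially the same approach as the paper: what you call the ``guiding lemma'' is precisely Rossman's Theorem~1.9, which says that for every $n$ there is an $m$ such that any two instances agreeing on all existential positive sentences of quantifier rank at most $m$ have $\homeq$-equivalent companions that agree on all FO sentences of quantifier rank at most $n$; combined with the finiteness of the number of $\leftrightarrow^m$-equivalence classes (Rossman's Lemma~3.9, which you tacitly use when you speak of ``finitely many possible CQ-truth-value patterns''), this yields exactly the paper's argument that $\class{C}$ is a finite union of $\leftrightarrow^m$-classes, each of which is definable by a Boolean combination of CQs.

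Two remarks on exposition rather than substance. First, the intermediate route via $\class{C}^{\rightarrow}$ and $\class{C}^{\leftarrow}$ is a dead end: these upward and downward homomorphism closures need not be FO-definable (even when $\class{C}$ is), and Rossman's preservation theorem therefore cannot be applied to them. You acknowledge this obstacle yourself and pivot away, but the paper's proof never takes this detour; it goes straight to Theorem~1.9 and Lemma~3.9. Second, you should cite Rossman's Theorem~1.9 as a black box rather than reconstruct its proof (the ``direct-sum and direct-product constructions'' and ``high-girth gadgets'' are part of how Rossman proves it, but they are not something you need to re-derive). Once you state the lemma cleanly and observe that only finitely many CQs of quantifier rank at most $m$ exist up to logical equivalence over a fixed finite schema, your last paragraph becomes a complete and correct proof of $(3)\Rightarrow(2)$.
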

\begin{proof}
We will first show the equivalence between statements (1) and (2), and then the equivalence between statements (2) and (3).

\noindent $(1) \Longrightarrow (2)$: Let a left query algorithm over $\bool$ for $\class{C}$ consist of $\set{F} = \sete{F_1, \etl, F_k}$ and $X \subseteq \cartpwr{\sete{0, 1}}{k}$, and let $\canq{F_1}, \etl, \canq{F_k}$ be the canonical conjunctive queries of $F_1, \etl, F_k$, respectively. For every tuple $t = (t_1, \etl, t_k) \in X$, define
\[\varphi_t \defas \bland_{t_i = 1} \canq{F_i} \land \bland_{t_i = 0} \neg\canq{F_i}.\]
Take $\varphi \defas \blor_{t \in X} \varphi_t$.
Then $\class{C} = \modclass(\varphi)$. 

\noindent $(2) \Longrightarrow (1)$: Every class $\class{C}$ defined by a conjunctive query $q$ admits a left $1$-query algorithm over $\bool$. Indeed, we can pick $\set{F}$ to consist of the canonical instance $A^q$ of $q$, and $X=\{1\}$. It follows by closure under Boolean combinations that every class defined by a Boolean combination of conjunctive queries also admits a left query algorithm over $\bool$.

\noindent $(2) \Longrightarrow (3)$: This implication  is immediate because CQs are first-order formulas whose truth is  preserved by homomorphisms.

\noindent $(3) \Longrightarrow (2)$:
We will use two results from \cite{rossman2008homomorphism} about the homomorphism preservation theorem in the finite. For instances $A$ and $B$, we write $A \leftrightarrow^n B$ to mean that $A$ and $B$ satisfy the same existential positive FO-sentences of quantifier rank at most $n$, and write $A \equiv^n B$ to mean that $A$ and $B$ satisfy the same FO-sentences of quantifier rank at most $n$. The two results from \cite{rossman2008homomorphism} in our concerns are
\begin{enumerate}[(a)]
\item Theorem~1.9: For every $n$, there is some $m$ that depends on $n$ such that for every instances $A$ and $B$ with $A \leftrightarrow^m B$, there are instances $A'$ and $B'$ such that $A \homeq A'$, $B \homeq B'$, and $A' \equiv^n B'$.
\item Lemma~3.9: For every $m$, the equivalence relation $A \leftrightarrow^m B$ has finitely many equivalence classes over the class of all instances.
\end{enumerate}

Now, let $\class{C}$ be a class definable by a FO-sentence $\varphi$ and closed under homomorphic equivalence $\homeq$. Let $n$ be the quantifier rank of $\varphi$, and let $m$ be the integer in the statement of (a). Note that $m$ depends on $n$ only.

We claim that $\modclass(\varphi) = \class{C}$ is closed under $\leftrightarrow^m$. Indeed,
assume that $A$ and $B$ are two instances such that $A \satis \varphi$ and $A \leftrightarrow^m B$. By (a), there are instances $A'$ and $B'$ such that $A \homeq A'$, $B \homeq B'$, and $A' \equiv^n B'$. It follows, successively, that\\
\centerline{
\begin{tabular}{rl}
$A' \satis \varphi$ & (since $A \satis \varphi$, $A \homeq A'$, and $\class{C}$ is closed under $\homeq$), \cr
$B' \satis \varphi$ & (since $\varphi$ has quantifier rank $n$ and $A' \equiv^n B'$), \cr
$B \satis \varphi$  & (since $B \homeq B'$ and $\class{C}$ is closed under $\homeq$).
\end{tabular}}

By (b), the equivalence relation $\leftrightarrow^m$ has finitely many equivalence classes. Let $A_1, \etl, A_k$ be representatives from each of these equivalence classes (one per equivalence class). For different $i, j$ in $\sete{1, \etl, k}$, let $\psi_{i, j}$ be an existential positive FO-sentence of quantifier rank at most $m$ such that $A_i \satis \psi_{i, j}$ but not $A_j \satis \psi_{i, j}$, and let $\psi(A_i)$ be the conjunction of all $\psi_{i, j}$. Each $\psi(A_i)$ is a Boolean combination of conjunctive queries and it holds for every instance $B$ that $B \leftrightarrow^m A_i$ if and only if $B \satis \psi(A_i)$. Then $\varphi$ is equivalent to the disjunction $\blor_{A_i \satis \varphi} \psi(A_i)$ since $\modclass(\varphi)$ is closed under $\leftrightarrow^m$. Indeed, if $B \satis \varphi$, then $B \leftrightarrow^m A_i$ for some $A_i$ that satisfies $\varphi$, hence $B \satis \psi(A_i)$. Conversely, if $B \satis \blor_{A_i \satis \varphi} \psi(A_i)$, then $B \satis \psi(A_i)$ for some $A_i$ that satisfies $\varphi$; it follows that $B \leftrightarrow^m A_i$, hence $B \satis \varphi$.
\end{proof}

\begin{corollary}\label{cor:lqa-for-fo-classes}
A class $\class{C}$ of instances  that is closed under homomorphic equivalence admits a 
left query algorithm over $\bool$ if and only if $\class{C}$ is FO-definable.
\end{corollary}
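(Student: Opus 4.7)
The plan is to derive this corollary as an immediate consequence of Theorem~\ref{thm:char-left-alg-bool}, which already establishes the equivalence of (1) admitting a left query algorithm over $\bool$ and (3) being FO-definable and closed under homomorphic equivalence. Under the standing hypothesis of the corollary, namely that $\class{C}$ is closed under homomorphic equivalence, condition (3) of the theorem collapses to plain FO-definability, and the claimed biconditional drops out.

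More concretely, for the forward direction I would argue as follows: if $\class{C}$ admits a left query algorithm over $\bool$, then by the implication $(1)\Rightarrow(3)$ of Theorem~\ref{thm:char-left-alg-bool}, $\class{C}$ is FO-definable (and closed under homomorphic equivalence, but this is already assumed). For the backward direction, if $\class{C}$ is FO-definable and, by hypothesis, closed under homomorphic equivalence, then condition (3) of Theorem~\ref{thm:char-left-alg-bool} is satisfied, so by $(3)\Rightarrow(1)$ the class $\class{C}$ admits a left query algorithm over $\bool$.

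Since all of the real work has already been done in proving Theorem~\ref{thm:char-left-alg-bool} — in particular, the nontrivial implication $(3)\Rightarrow(2)$, which relied on Rossman's results on the homomorphism preservation theorem in the finite and the finiteness of the number of $\leftrightarrow^m$-equivalence classes — there is no remaining obstacle. The corollary is essentially just a convenient restatement of the theorem for the subclass of hypotheses most relevant to the main theme of the paper, namely closure under homomorphic equivalence.
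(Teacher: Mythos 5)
Your proof is correct and follows exactly the route the paper intends: the corollary is an immediate restatement of the equivalence $(1)\Leftrightarrow(3)$ in Theorem~\ref{thm:char-left-alg-bool} under the standing hypothesis that $\class{C}$ is closed under homomorphic equivalence. The paper itself omits the proof precisely because it is this one-liner.
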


Corollary~\ref{cor:lqa-for-fo-classes}, in particular,
applies to classes of the form $\csp(A)$, since
such classes are closed under homomorphic equivalence. 
It was shown in \cite{tardif2007characterisation} that testing, for a 
given instance $A$, whether $\csp(A)$ is
FO-definable, is NP-complete (and in fact, in polynomial time when $A$ is a \emph{core}, i.e. when there is no homomorphism from $A$ to a proper subinstance of $A$). It follows that testing if a 
    given $\csp(A)$ admits a left query algorithm over $\bool$
    is NP-complete. This extends to finite unions
    of CSPs:

\begin{proposition}\label{prop:u-csp-left}
    The following problem is NP-complete: 
    given instances $A_1, \ldots, A_n$, does $\bigcup_{1\leq i \leq n} \csp(A_i)$
    admit a left query algorithm over $\bool$? 
\end{proposition}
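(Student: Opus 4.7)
The plan is to reduce the problem to deciding FO-definability via Corollary~\ref{cor:lqa-for-fo-classes}, and then to establish NP-completeness of the resulting decision problem. Set $\class{C} \defas \bigcup_{1 \leq i \leq n} \csp(A_i)$: since each $\csp(A_i)$ is closed under homomorphic equivalence, so is $\class{C}$, and hence by Corollary~\ref{cor:lqa-for-fo-classes}, $\class{C}$ admits a left query algorithm over $\bool$ if and only if $\class{C}$ is FO-definable. The question thus becomes: given $A_1, \ldots, A_n$, is $\bigcup_{i} \csp(A_i)$ FO-definable?

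For NP-hardness, I would specialize to $n = 1$: the problem becomes ``given $A$, is $\csp(A)$ FO-definable?'', which is NP-hard by~\cite{tardif2007characterisation}. The trivial reduction $A \mapsto (A)$ then yields NP-hardness of the general problem.

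For membership in NP, I would exploit that $\class{C}$ is closed under homomorphic preimage (if $D \to D'$ and $D' \to A_i$, then $D \to A_i$), so the complement $\class{C}^c$ is closed under homomorphisms. If $\class{C}$ is FO-definable, so is $\class{C}^c$, and Rossman's finite homomorphism preservation theorem~\cite{rossman2008homomorphism} then yields an existential positive FO-sentence defining $\class{C}^c$; equivalently, $\class{C}$ has finite duality with a set of connected obstructions $\{F_1, \ldots, F_m\}$ so that $D \in \class{C}$ iff no $F_j \to D$. The NP algorithm would guess such an obstruction set and verify (a) that no $F_j$ maps to any $A_i$ and (b) that every $D$ avoiding all the $F_j$ lies in some $\csp(A_i)$.

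The main obstacle is establishing a polynomial bound on $|F_j|$ in terms of $\sum_i |A_i|$ and making the completeness check (b) feasible in polynomial time. In the single-CSP case, NP-membership in~\cite{tardif2007characterisation} rests on a tree- (or bounded-pathwidth-) duality analysis that yields both such a polynomial bound and a tractable verification (homomorphism checking from bounded-treewidth obstructions is polynomial). Extending this argument to finite unions is the technically delicate step, since obstructions must now simultaneously avoid all the $A_i$, which can preclude a direct reduction to the single-CSP bound.
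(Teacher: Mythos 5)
Your reduction to FO-definability via Corollary~\ref{cor:lqa-for-fo-classes} and your NP-hardness argument (specialize to $n=1$) both match the paper. However, your NP-membership argument has a genuine gap that you yourself flag: you attempt to adapt the Tardif--Ne\v{s}et\v{r}il single-CSP machinery to the union directly, by guessing an obstruction set for $\bigcup_i \csp(A_i)$, and you correctly note that you have neither a polynomial size bound on the obstructions nor a way to verify ``completeness'' of the guessed set in polynomial time. This is not a routine extension, and the approach as sketched does not go through.

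The paper avoids this entirely by a decomposition step you are missing. First, pass (without loss of generality) to a subset of $\{A_1,\dots,A_n\}$ that is pairwise homomorphically incomparable: if $A_j \to A_k$ for $j\neq k$, then $\csp(A_j)\subseteq\csp(A_k)$ and $A_j$ can be dropped. Second, invoke the folklore lemma (an explicit proof is in Lemma~5.13 of~\cite{Bienvenu15:ontology}) that for pairwise homomorphically incomparable $A_1,\dots,A_m$, the union $\bigcup_i \csp(A_i)$ is FO-definable if and only if \emph{each} $\csp(A_i)$ is FO-definable. With this in hand, NP-membership is immediate: nondeterministically guess which $A_j$ to discard together with witnessing homomorphisms $A_j\to A_k$ into a retained $A_k$, and for each retained $A_k$ guess the NP-certificate of FO-definability of $\csp(A_k)$ provided by~\cite{tardif2007characterisation}. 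No new obstruction-set bound is needed; you reuse the single-CSP certificates componentwise. Your from-scratch construction is precisely the work the decomposition lemma lets you skip.
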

\begin{proof}
    Without loss of generality, assume that the instances 
    $A_1, \ldots, A_n$ are pairwise 
    homomorphically incomparable  (because, if
    $A_j\to A_k$, then $A_j$ can be removed without
    affecting the class defined by $\bigcup_{1\leq i\leq n}\csp(A_i)$). It is known that, in this case,
    $\bigcup_{1\leq i\leq n}\csp(A_i)$ is FO-definable if and only if
    for each $1 \leq i\leq n$, $\csp(A_i)$ is FO-definable
    (this may be considered folklore, see~\cite[Lemma 5.13]{Bienvenu15:ontology} for an explicit proof). 
    The result follows, by Corollary~\ref{cor:lqa-for-fo-classes}.
\end{proof}

Corollary~\ref{cor:lqa-for-fo-classes} also applies
to classes of the form $\homtype{A}$, and
we can derive a
similar complexity bound. This will be obtained using  the 
following proposition.

\begin{proposition}\label{prop:positive-homtype-csp}
    Let $A$ be an instance and 
    $\mathit{K} \in \sete{\bool, \nat}$. 
    Then the following statements
    are equivalent.
    \begin{enumerate}
        \item $\homtype{A}$ has a left query algorithm over $\mathit{K}$.
        \item $\csp(A)$ has a left query algorithm over $\mathit{K}$.
    \end{enumerate}
\end{proposition}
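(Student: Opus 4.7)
The plan is to prove the two implications separately by exploiting an algebraic relationship between $\homtype{A}$ and $\csp(A)$. The observation underlying the first direction is the decomposition $\homtype{A} = \csp(A) \intsc \modclass(\canq{A})$, reflecting the fact that $D \homeq A$ means $D \homto A$ together with $A \homto D$. The observation underlying the second direction is the identity: $D \in \csp(A)$ if and only if $D \dsum A \in \homtype{A}$, which allows us to reduce a $\csp(A)$-membership test to a $\homtype{A}$-membership test by ``padding'' the input with a fixed copy of $A$.

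For the direction $(2) \Rightarrow (1)$, first note that $\modclass(\canq{A})$ trivially admits a left $1$-query algorithm over $\bool$ via $\set{F} = \sete{A}$ and $X = \sete{1}$, hence also one over $\nat$ by Proposition~\ref{prop:query_alg_bool_implies_bag}. Since the classes admitting a left query algorithm over $\mathit{K}$ are closed under Boolean combinations, a left query algorithm over $\mathit{K}$ for $\csp(A)$ can be combined with the trivial one for $\modclass(\canq{A})$ to yield a left query algorithm over $\mathit{K}$ for $\homtype{A} = \csp(A) \intsc \modclass(\canq{A})$. Concretely, one augments $\set{F}$ with the instance $A$ itself and constrains the new coordinate of the accepting tuples to be nonzero.

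For the direction $(1) \Rightarrow (2)$, the crucial identity is that $D \in \csp(A)$ if and only if $D \dsum A \homeq A$. The forward direction is immediate since $A \homto A$ and, by assumption, $D \homto A$. Conversely, if $D \dsum A \homto A$, then the restriction of such a homomorphism to $\adom(D)$ is a homomorphism $D \homto A$. Now, given a left query algorithm $(\set{F}, X)$ over $\mathit{K}$ for $\homtype{A}$, by Proposition~\ref{prop:connected} we may assume that every $F_i \in \set{F}$ is connected. Then Part~1 of Proposition~\ref{prop:hom-combinatorics} gives
\[
\hom_\mathit{K}(F_i, D \dsum A) \;=\; \hom_\mathit{K}(F_i, D) \,+_\mathit{K}\, \hom_\mathit{K}(F_i, A).
\]
Setting $c_i \defas \hom_\mathit{K}(F_i, A)$, which depends only on $A$ and $F_i$, we re-express the test ``$\hom_\mathit{K}(\set{F}, D \dsum A) \in X$'' as ``$\hom_\mathit{K}(\set{F}, D) \in X'$'' where $X' \defas \setm{(s_1, \etl, s_k)}{(s_1 +_\mathit{K} c_1, \etl, s_k +_\mathit{K} c_k) \in X}$. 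Thus $(\set{F}, X')$ is a left query algorithm over $\mathit{K}$ for $\csp(A)$.

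The only real subtlety lies in the $(1) \Rightarrow (2)$ direction, where we crucially need the additivity of $\hom_\mathit{K}(F,-)$ on disjoint unions; this requires $F$ to be connected, and Proposition~\ref{prop:connected} assures us that this restriction on $\set{F}$ is without loss of generality. The entire argument is uniform in $\mathit{K} \in \sete{\bool, \nat}$ because both semirings support the additivity law used above.
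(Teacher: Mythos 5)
Your proof is correct and follows essentially the same approach as the paper: the $(2)\Rightarrow(1)$ direction uses the identical decomposition $\homtype{A} = \csp(A) \intsc \setm{B}{A \homto B}$ together with closure under Boolean combinations, and the $(1)\Rightarrow(2)$ direction rests on the same padding trick $D \mapsto D \dsum A$ combined with the connectivity reduction and the additivity of $\hom_K$ on direct sums. The only cosmetic difference is that you phrase $(1)\Rightarrow(2)$ as a direct, constructive translation of an algorithm $(\set{F},X)$ for $\homtype{A}$ into an algorithm $(\set{F},X')$ for $\csp(A)$, whereas the paper argues by contraposition, exhibiting an indistinguishable pair $P'=P\dsum A$, $Q'=Q\dsum A$ from an indistinguishable pair $P,Q$ for $\csp(A)$.
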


\begin{proof}
Let $\mathit{K} \in \sete{\bool, \nat}$ throughout the proof.

\noindent  $(2) \Longrightarrow (1)$:
Clearly, 
$\homtype{A} = \csp(A) \intsc \setm{B}{A \homto B}$. 
Also, $\setm{B}{A \homto B}$
 admits an obvious left query algorithm over $\mathit{K}$. The result follows
by closure under Boolean combinations.

\noindent  $(1) \Longrightarrow (2)$: Assume $\csp(A)$ has no  left query algorithm over $\mathit{K}$.
Let $\set{F}$ be an arbitrary finite set of connected instances (think: candidate left query algorithm for $\homtype{A}$).
Since $\csp(A)$ has no left query algorithm over $\mathit{K}$, there are instances $P \in \csp(A)$ and $Q \notin \csp(A)$ 
such that $\hom_\mathit{K}(\set{F},P) = \hom_\mathit{K}(\set{F},Q)$.
Let $P' = P \dsum A$ and let $Q' = Q \dsum A$.
Then, by construction, $P' \in \homtype{A}$ and $Q' \notin \homtype{A}$ but $\hom_\mathit{K}(\set{F},P') = \hom_\mathit{K}(\set{F},Q')$ (cf.~Proposition~\ref{prop:hom-combinatorics}).
Therefore, by Proposition~\ref{prop:connected}, $\homtype{A}$ has no left query algorithm over $\mathit{K}$.
\end{proof}

Consequently, the following problem is also NP-complete: 
\emph{given an instance $A$, does $\homtype{A}$
    admit a left query algorithm over $\bool$?} 
    
    \begin{remark}\label{rem:no-lqa}
In Example~\ref{exam:csp}, we asserted that 
the class $\csp(K_3)$ of 3-colorable graphs admits no  left query algorithm over $\bool$.
Furthermore, in Example~\ref{exam:hom-equiv-K3},
we asserted that $\homtype{K_3}$ (i.e.,
the class of graphs that are 3-colorable and also contain a triangle)  admits  no left query algorithm over $\bool$.
Corollary~\ref{cor:lqa-for-fo-classes} and Proposition~\ref{prop:positive-homtype-csp},
now account for the non-existence of a left query algorithm over
$\bool$ for these classes: the reason is these two classes are not FO-definable. In the next section
(see~Corollary~\ref{cor:lqa-nat-fo}), we will see that
the same explanation accounts for the fact that these
classes admit no left query algorithm over $\nat$ either.
\end{remark}

    Proposition \ref{prop:positive-homtype-csp}
    extends  to finite unions of homomorphic equivalence classes.

\begin{theorem}\label{thm:hom-equiv-class-left}
For all instances $A_1, \etl, A_n$, the following statements are equivalent.
\begin{enumerate}
\item $\bunion_{1 \leq i \leq n} \homtype{A_i}$ admits a left query algorithm over $\bool$ (equivalently, is FO-definable).
\item Each $\homtype{A_i}$, for $i=1,\ldots, n$, admits a left query algorithm over $\bool$ (equivalently, is FO-definable).
\end{enumerate}
\end{theorem}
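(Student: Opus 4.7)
The direction $(2) \Rightarrow (1)$ is immediate: the classes that admit a left query algorithm over $\bool$ are closed under finite unions (as noted after Definition~\ref{def:query_alg}). The parenthetical equivalences ``equivalently, is FO-definable'' are instances of Corollary~\ref{cor:lqa-for-fo-classes}, applied to the observation that each $\homtype{A_i}$, and any union thereof, is closed under $\homeq$. So the real content of the theorem lies in $(1) \Rightarrow (2)$.

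For $(1) \Rightarrow (2)$, my plan is to proceed by induction on $n$, peeling off one class at a time via maximal elements of the homomorphism preorder. Without loss of generality, I may assume that $A_1, \ldots, A_n$ are pairwise not homomorphically equivalent, for otherwise duplicate classes in the union can be dropped with no effect on the hypothesis or conclusion. The key identity is: if $A_i$ is \emph{maximal} in $\sete{A_1, \etl, A_n}$ with respect to $\homto$ (i.e., $A_i \homto A_j$ forces $j = i$), then
\[
\homtype{A_i} \;=\; \Bigl(\bunion_{1 \leq j \leq n} \homtype{A_j}\Bigr) \intsc \modclass(\canq{A_i}).
\]
The inclusion $\subseteq$ is clear. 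For $\supseteq$, any $B$ in the right-hand side lies in some $\homtype{A_j}$ with $A_i \homto B$, so $A_i \homto B \homto A_j$, whence $A_i \homto A_j$; maximality then forces $j = i$, and $B \in \homtype{A_i}$.

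Since $\modclass(\canq{A_i})$ is (existential positive, hence) FO-definable, the identity shows that $\homtype{A_i}$ is FO-definable whenever $\bunion_j \homtype{A_j}$ is. Consequently, $\bunion_{j \neq i} \homtype{A_j} = (\bunion_j \homtype{A_j}) \setminus \homtype{A_i}$ is also FO-definable, and the inductive hypothesis applied to the remaining $n-1$ instances delivers FO-definability of each $\homtype{A_j}$ with $j \neq i$. I do not anticipate a serious obstacle: the only subtle points are the WLOG reduction to pairwise non-equivalent instances and the verification that maximality isolates $\homtype{A_i}$ inside $\modclass(\canq{A_i})$. Beyond these, the argument uses only Corollary~\ref{cor:lqa-for-fo-classes} together with Boolean closure of FO-definable classes.
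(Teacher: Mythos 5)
Your proof is correct, and it takes a genuinely different --- and arguably cleaner --- route than the paper's. Both proofs proceed by induction on $n$ after the WLOG reduction to pairwise non-equivalent $A_j$'s, and both pick a maximal element with respect to $\homto$. But from that point on they diverge. The paper argues by contraposition, splitting into two cases depending on whether the maximal piece $\homtype{A_n}$ has a left query algorithm: the easy case invokes Boolean closure, while the hard case invokes Proposition~\ref{prop:positive-homtype-csp} to pass to $\csp(A_n)$ and then builds an explicit counterexample pair of instances $P\dsum A_n$, $Q\dsum A_n$ that a putative left query algorithm $\class{F}$ of connected instances cannot distinguish. Your proof avoids all of this via the identity
\[
\homtype{A_i} \;=\; \Bigl(\bunion_{1 \leq j \leq n} \homtype{A_j}\Bigr) \intsc \modclass(\canq{A_i}),
\]
which is valid by exactly the maximality argument you gave (and the pairwise disjointness of the $\homtype{A_j}$, needed also for the set-difference step). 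This identity shows \emph{unconditionally} that the maximal class inherits FO-definability from the union, which collapses the paper's case analysis into a single direct argument. The trade-off: the paper's construction makes the ``reason'' for failure explicit (a pair of indistinguishable instances), which is in the same spirit as the rest of that section, whereas your proof is shorter and uses only the FO-definability characterization of Corollary~\ref{cor:lqa-for-fo-classes} plus Boolean closure. Both are sound; yours is the more economical.
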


\begin{proof} {} 
It is clear that the second statement implies the first.
We will prove by induction on $n$ that the first statement implies the second.
The base case ($n = 1$) is immediate since (i) and (ii) coincide. Next, let $n > 1$ and $\class{C} \defas \bunion_{1 \leq i \leq n} \homtype{A_i}$. 
We proceed by contraposition: suppose that $\homtype{A_i}$ does not admit a left query algorithm over $\bool$, for some $i \leq n$.
We may assume without loss of generality that 
$A_1, \ldots, A_n$ are pairwise not homomorphically equivalent.
Note that $\homto$ induces a preorder among $A_1, \etl, A_n$ and, since $n$ is finite, there is a maximal. Without loss of generality, assume that $A_n$ is a maximal, that is, $A_n \not\homto A_i$ for all $i < n$. We distinguish two cases.

(1) $\homtype{A_n}$ admits a left query algorithm over $\bool$. Then, for some $i\leq n - 1$, $\homtype{A_i}$ 
does not admit a left query algorithm over $\bool$.
By induction hypothesis, we have $\class{C}' \defas \bunion_{1 \leq i \leq n - 1} \homtype{A_i}$ does not admit a left query algorithm over $\bool$. Then it follows that $\class{C}$ does not admit a left query algorithm over $\bool$ either, for otherwise $\class{C}' = \class{C} \setminus \homtype{A_n}$ would admit a 
left query algorithm over $\bool$. 

(2) $\homtype{A_n}$ does not admit a left query algorithm over $\bool$. By Proposition~\ref{prop:positive-homtype-csp}, $\csp(A_n)$ does not admit a left query algorithm over $\bool$ either. Let $\set{F}$ be an arbitrary finite non-empty set of connected instances. Since $\csp(A_n)$ does not admit a left query algorithm over $\bool$, there are $P \in \csp(A_n)$ and $Q \notin \csp(A_n)$ such that $\hom_\bool(\set{F}, P) = \hom_\bool(\set{F}, Q)$. It follows that
\begin{itemize}
\item $(P \dsum A_n) \in \homtype{A_n}$, because both $P, A_n \in \csp(A_n)$,
\item $(Q \dsum A_n) \notin \homtype{A_n}$, because $Q \notin \csp(A_n)$,
\item for all $i < n$, $(Q \dsum A_n) \notin \homtype{A_i}$, because $A_n \not\homto A_i$, and
\item $\hom_\bool(\set{F}, P \dsum A_n) = \hom_\bool(\set{F}, Q \dsum A_n)$, because the instances in $\set{F}$ are all connected.
\end{itemize}
The first three points above give $(P \dsum A_n) \in \class{C}$ and $(Q \dsum A_n) \notin \class{C}$. Therefore, by Proposition~\ref{prop:connected}, $\class{C}$ has no left query algorithm over $\bool$. 
\end{proof}

Note that Theorem~\ref{thm:hom-equiv-class-left} only applies to finite unions of
homomorphic-equivalence classes. It may fail for 
infinite unions. Specifically, the class of all instances trivially admits a left query algorithm over $\bool$,
and it is the union of all equivalence classes $\homtype{A}$, as  $A$ varies over all instances; however, as seen earlier, $\homtype{K_3}$ does not admit any left query algorithm over $\bool$.

\begin{corollary}
    The following problem is NP-complete: 
    given instance $A_1, \ldots, A_n$, does $\bigcup_{1 \leq i \leq n} \homtype{A_i}$
    admit a left query algorithm over $\bool$?
\end{corollary}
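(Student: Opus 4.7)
The plan is to show that the decision problem reduces, in both directions, to testing first-order definability of $\csp(A)$, which was shown to be \textsf{NP}-complete in \cite{tardif2007characterisation} (and explicitly used just above in Proposition~\ref{prop:u-csp-left}). The chain of equivalences we will use is: $\bigcup_i \homtype{A_i}$ admits a left query algorithm over $\bool$ iff each $\homtype{A_i}$ does (by Theorem~\ref{thm:hom-equiv-class-left}) iff each $\csp(A_i)$ admits a left query algorithm over $\bool$ (by Proposition~\ref{prop:positive-homtype-csp}) iff each $\csp(A_i)$ is FO-definable (by Corollary~\ref{cor:lqa-for-fo-classes}, since every $\csp(A_i)$ is closed under homomorphic equivalence).

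For \textbf{membership in \textsf{NP}}, given $A_1,\ldots,A_n$, we guess, for each $i \in \{1,\ldots,n\}$, a polynomial-size certificate that $\csp(A_i)$ is FO-definable, exploiting the \textsf{NP} upper bound from \cite{tardif2007characterisation} for the single-instance problem. Verifying each certificate takes polynomial time, and the overall procedure runs in polynomial time in the size of the input $A_1,\ldots,A_n$. By the chain of equivalences above, the guesses succeed for all $i$ if and only if $\bigcup_i \homtype{A_i}$ admits a left query algorithm over $\bool$.

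For \textbf{\textsf{NP}-hardness}, we reduce from the single-instance version of the problem: given an instance $A$, decide whether $\csp(A)$ is FO-definable (equivalently, by Corollary~\ref{cor:lqa-for-fo-classes}, whether $\csp(A)$ admits a left query algorithm over $\bool$), which is \textsf{NP}-hard by \cite{tardif2007characterisation}. The reduction simply outputs the instance $A_1 \defas A$ (i.e., takes $n = 1$). Then, by Proposition~\ref{prop:positive-homtype-csp}, $\homtype{A}$ admits a left query algorithm over $\bool$ if and only if $\csp(A)$ does, which settles correctness.

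No step presents a real obstacle: the heavy lifting is already carried by Theorem~\ref{thm:hom-equiv-class-left}, Proposition~\ref{prop:positive-homtype-csp}, Corollary~\ref{cor:lqa-for-fo-classes}, and the external \textsf{NP}-completeness result of \cite{tardif2007characterisation}. The only point worth double-checking is that the membership algorithm indeed guesses certificates of polynomial total size in $|A_1| + \cdots + |A_n|$; this is immediate since each individual certificate has size polynomial in $|A_i|$.
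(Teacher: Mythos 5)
Your proof is correct and follows the same chain of reductions the paper has in mind: Theorem~\ref{thm:hom-equiv-class-left} reduces the union to the single-instance case, Proposition~\ref{prop:positive-homtype-csp} and Corollary~\ref{cor:lqa-for-fo-classes} reduce each single-instance case to FO-definability of $\csp(A_i)$, and the NP-completeness of that test is the cited result of~\cite{tardif2007characterisation}. The paper gives no explicit proof for this corollary precisely because it is this routine assembly of the preceding results; your write-up is a faithful and complete account of it.
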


Each class $\class{C}$ that is closed under
homomorphic equivalence  can trivially be 
represented as a possibly-infinite union of classes
of the form $\homtype{A}$. The 
algorithmic problem of testing for the existence
of a left query algorithm, of course, makes sense only for 
finitely presented inputs. This motivates
the above corollary.

As a last case study, 
consider Boolean 
Datalog programs, i.e.,
Datalog programs with a zero-ary goal predicate.
We  omit a detailed definition of the syntax and semantics of Datalog, which can be found, e.g., in \cite{abiteboul1995foundations}. 
Each Datalog program $P$ naturally defines a class of instances
$\class{C}_P$. 
It is well known that \emph{the 
class
$\class{C}_P$ is closed under homomorphic equivalence}.
Furthermore, $\class{C}_P$ is 
FO-definable if and only if $P$ is ``bounded'' (meaning that 
there is a fixed number $n$, depending only on $P$
 and not on the input instance, such that
$P$ reaches its fixed point after at most 
$n$ iterations), as was first shown by 
\cite{AjtaiGurevich94} and also follows from \cite{rossman2008homomorphism}.
The boundedness problem for Boolean Datalog programs is undecidable~\cite{Gaifman93:undecidable}. 
Therefore, we have the following result.

\begin{corollary}
    The following problem is undecidable:
    given a Boolean Datalog program $P$, 
    does $\class{C}_P$ admit a left query algorithm over $\bool$.
\end{corollary}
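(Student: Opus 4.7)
The plan is to obtain the undecidability by a direct reduction from the (known to be undecidable) boundedness problem for Boolean Datalog programs, chaining together the two facts that have just been highlighted in the paragraph preceding the corollary.

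First, I would observe that for every Boolean Datalog program $P$, the class $\class{C}_P$ is closed under homomorphic equivalence, so Corollary~\ref{cor:lqa-for-fo-classes} applies and gives that $\class{C}_P$ admits a left query algorithm over $\bool$ if and only if $\class{C}_P$ is FO-definable. Next, I would invoke the Ajtai--Gurevich characterization cited in the paper: $\class{C}_P$ is FO-definable if and only if $P$ is bounded. Combining these two equivalences, the property ``$\class{C}_P$ admits a left query algorithm over $\bool$'' coincides exactly with the property ``$P$ is bounded''.

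Since boundedness of Boolean Datalog programs is undecidable by the result of Gaifman et al.~\cite{Gaifman93:undecidable}, the identity map from programs to programs is a (trivial) reduction from the boundedness problem to the left query algorithm existence problem, which therefore is undecidable as well.

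There is no real obstacle here; the work has all been done in Corollary~\ref{cor:lqa-for-fo-classes} together with the two imported theorems about Datalog. The only point that requires a moment of care is verifying that the known results about Datalog apply in the setting of this paper (arbitrary finite relational instances, active-domain semantics), but this is precisely the setting in which both Ajtai--Gurevich and Gaifman et al.\ are formulated, so no adaptation is needed.
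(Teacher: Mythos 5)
Your proof is correct and is exactly the argument the paper intends: the corollary follows immediately by chaining Corollary~\ref{cor:lqa-for-fo-classes} (for the homomorphic-equivalence-closed class $\class{C}_P$) with the Ajtai--Gurevich equivalence between FO-definability and boundedness and the undecidability of boundedness due to Gaifman et al. No further justification is needed.
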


\section{Existence vs.~Counting: When Does Counting Not Help?}\label{sec:bool_vs_bag}

Left query algorithms over $\nat$ are more powerful than left query algorithms
over $\bool$. This is trivially so
because query algorithms over $\bool$
cannot distinguish homomorphically equivalent
instances. 

\begin{example}
Let $\class{C}$ be the isomorphism class of the 
instance $A$ consisting of the fact $R(a,a)$
(in other words, the
single-node reflexive digraph).
Since
$\class{C}$ is not closed under homomorphism
equivalence, it does not admit a left query algorithm (nor a right query algorithm) over $\bool$. 
On the other hand, it admits a left query algorithm
over $\nat$: by counting the number of
homomorphisms from $A$ to a given input instance $B$, we can verify that $B$ contains a reflexive
node; and by counting the number of homomorphisms
from the instance $A'$ consisting of a single edge $R(a,b)$, we can verify that the total number of
edges in the graph is equal to 1. More precisely,
let $\class{F}=\{A,A'\}$ and let $X=\{(1,1)\}$.
Then $(\class{F},X)$ is a left query algorithm over $\nat$
for $\class{C}$.
\end{example}

As it turns out, in a precise sense, this is the only reason why 
left query algorithms over $\nat$ are more powerful than
left query algorithms over $\bool$: the ability to count
does not give more power when it comes to classes that are closed under homomorphic equivalence. This follows from 
the next theorem.

\begin{theorem}\label{thm:lqa-nat-fo1}
Let $\class{C}$ be a class of instances closed under
homomorphic equivalence. For every finite set $\set{F}$ of connected instances, the following statements are equivalent.
\begin{enumerate}
    \item $\class{C}$ admits a left query algorithm over $\nat$ of the form $(\set{F},X)$ for some set $X$.
    \item $\class{C}$ admits a left query algorithm over $\bool$ of the form $(\set{F},X')$ for some set $X'$.
\end{enumerate}
\end{theorem}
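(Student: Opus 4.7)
The direction $(2)\Rightarrow(1)$ follows immediately from Proposition~\ref{prop:query_alg_bool_implies_bag}, which produces a $\nat$-LQA with the same set $\set{F}$. For the converse direction $(1)\Rightarrow(2)$, my plan is to take as candidate set
\[X' \defas \setm{\hom_\bool(\set{F}, A)}{A \in \class{C}},\]
and show that $(\set{F},X')$ is a left query algorithm over $\bool$ for $\class{C}$. The nontrivial task is to establish: if $\hom_\bool(\set{F},A) = \hom_\bool(\set{F},B)$ for some $B \in \class{C}$, then $A \in \class{C}$. In other words, I must show that membership in $\class{C}$ is invariant under agreement of $\bool$-profiles in $\set{F}$.

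My plan is to reduce this to the following \emph{key lemma}: whenever $\hom_\bool(\set{F},A) = \hom_\bool(\set{F},B)$, there exist $A' \homeq A$ and $B' \homeq B$ with $\hom_\nat(\set{F},A') = \hom_\nat(\set{F},B')$. Given the key lemma, the conclusion follows by a short chain of equivalences: since $\class{C}$ is closed under $\homeq$ and admits the $\nat$-LQA $(\set{F},X)$, we have $B\in\class{C} \Leftrightarrow B'\in\class{C} \Leftrightarrow \hom_\nat(\set{F},B')\in X \Leftrightarrow \hom_\nat(\set{F},A')\in X \Leftrightarrow A'\in\class{C} \Leftrightarrow A\in\class{C}$.

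The main work, and the hardest part, will be proving the key lemma. Let $I \defas \setm{i}{F_i \homto A}$, which equals $\setm{i}{F_i \homto B}$ by hypothesis. For $i \notin I$ both hom-counts are already $0$, so the profile automatically matches on that coordinate, provided the constructions of $A'$ and $B'$ do not introduce new homomorphic images of $F_i$. For $i \in I$ both counts are positive, and the goal is to rebalance them using modifications that remain within the hom.-equivalence class of $A$ (resp.\ $B$). The available operations are: disjoint unions $A \dsum X$ with $X \homto A$, which preserve $\homeq A$ and, by connectivity of the $F_i$'s and Proposition~\ref{prop:hom-combinatorics}(1), add $\hom_\nat(\set{F},X)$ to the profile; and direct products $A \dprod Y$ with $A \homto Y$, which preserve $\homeq A$ and multiply profiles coordinatewise by Proposition~\ref{prop:hom-combinatorics}(2). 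Crucially, for each $i \in I$, $F_i$ maps to both $A$ and $B$, so every such $F_i$ is a shared building block that can be added as a summand on either side; similarly, $A \dprod B$ maps to both $A$ and $B$, giving another shared ingredient with profile $(\hom_\nat(F_i,A)\cdot \hom_\nat(F_i,B))_i$. I would then construct $A'$ and $B'$ as disjoint sums that combine $A$ (resp.\ $B$) with suitable nonnegative-integer multiplicities of these shared ingredients and their iterated products, and verify that the resulting system of equations---one per coordinate $i \in I$---admits a nonnegative-integer solution. The core obstacle I foresee is precisely this final Diophantine step: showing that the semirings of achievable profile-modifications on the two sides, which share substantial common structure (the columns of the hom-count matrix $(\hom_\nat(F_j,F_i))_{j,i \in I}$ together with the profile of $A \dprod B$), provide enough flexibility to bridge the displacement $\hom_\nat(\set{F},B)-\hom_\nat(\set{F},A)$.
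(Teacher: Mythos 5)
Your high-level plan is exactly the one the paper follows. The $(2)\Rightarrow(1)$ step is indeed just Proposition~\ref{prop:query_alg_bool_implies_bag}, and for $(1)\Rightarrow(2)$ the whole content is in what you call the key lemma: given $\hom_\bool(\set{F},A)=\hom_\bool(\set{F},B)$, find $A'\homeq A$ and $B'\homeq B$ with $\hom_\nat(\set{F},A')=\hom_\nat(\set{F},B')$. (The paper phrases this via a reduction to ``simple'' $X$ and a Boolean-combination formula $\varphi$, but it is the same statement once you fix a Boolean profile.) The closing chain of equivalences is also the same. So the framing is right.

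The gap is precisely the step you defer. Your proposed shared building blocks are the $F_i$ with $i\in I$, the instance $A\otimes B$, and iterated direct products of these; you then hope that the resulting profile vectors span enough to absorb the displacement $\hom_\nat(\set{F},B)-\hom_\nat(\set{F},A)$ (after possibly scaling by some constant). This fails in general. The issue is that the columns $\bigl(\hom_\nat(F_i,H)\bigr)_{i\in I}$ for $H$ ranging over your candidate ingredients may simply coincide on two coordinates $i\neq i'$. If $\hom_\nat(F_i,F_j)=\hom_\nat(F_{i'},F_j)$ for all $j\in I$ and also $\hom_\nat(F_i,A\otimes B)=\hom_\nat(F_{i'},A\otimes B)$, then by Proposition~\ref{prop:hom-combinatorics}(2) every iterated product of these ingredients still gives identical values in coordinates $i$ and $i'$; hence no combination of sums and products of them can move the two coordinates independently, and the required system is unsolvable whenever $\hom_\nat(F_i,B)-\hom_\nat(F_i,A)\neq\hom_\nat(F_{i'},B)-\hom_\nat(F_{i'},A)$. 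The paper overcomes this with two extra ingredients you are missing. First, a Lov\'asz-style inclusion--exclusion argument (Lemma~\ref{le:lovasz}): for any two non-isomorphic $F_i,F_{i'}$ among the $F_j$ with $j\in I$, there is a \emph{subinstance} $H$ of one of them with $\hom_\nat(F_i,H)\neq\hom_\nat(F_{i'},H)$. These distinguishers are subinstances of some $F_{i'}$ with $i'\in I$, hence still map into both $A$ and $B$, so they qualify as shared building blocks, and they guarantee that every pair of coordinates can be separated. Second, a polynomial-interpolation lemma (Lemma~\ref{le:pol}) built on a Vandermonde argument: once the columns separate all pairs, one can form an $s\times s$ nonsingular integer matrix whose determinant $c$ yields the necessary divisor; taking $c$ disjoint copies of $A$ (resp.\ $B$) plus the instance $H_{p_A}$ (resp.\ $H_{p_B}$) encoding the positive part of the interpolating polynomial produces the desired $A'$ and $B'$. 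Without the Lov\'asz distinguishers the linear-algebra step can genuinely have no integer (indeed no rational) solution, so this is not merely a technical detail to be tidied up later but the heart of the argument.
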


\begin{proof}
The case $\class{C} = \emptyset$ is trivial, so we will assume that $\class{C}$ is non-empty.
The implication $(2) \Longrightarrow (1)$ is given by  Proposition~\ref{prop:query_alg_bool_implies_bag}.
Let us prove the implication $(1)  \Longrightarrow (2)$.

Let $(\class{F},X)$ be a left query algorithm over $\nat$ for $\class{C}$  where $\class{F}=\{F_1,\dots,F_k\}$
of pairwise non-isomorphic instances and $X\subseteq\nat^k$. It is enough to focus on {\em simple} sets $X$, where a set $X$ is simple if 
for every $1\leq i\leq k$ and every $\mathbf{t},\mathbf{t}'\in X$, $\mathbf{t}(i)=0$ iff $\mathbf{t}'(i)=0$.  Indeed, assume that $(1)\Rightarrow(2)$ holds whenever $X$ is simple. Then, if $X$ is not simple, partition $X$ into maximal simple subsets $X_1,\dots,X_r$ and, for every $1\leq i\leq r$, let $\class{C}_i$ be the class of instances that admits the left query algorithm $(\class{F},X_i)$. It is easy to verify that $\class{C}_i$ is closed under homomorphic equivalence and, hence, by assumption, admits a left query algorithm over $\bool$ of the form $(\set{F}, X_i')$ for some set $X_i'$. Then $(\set{F}, \bunion_{1 \leq i \leq r} X_i')$ is a left query algorithm over $\bool$ for $\class{C}$.

Let us assume that $X$ is simple and non-empty (otherwise $\class{C}=\emptyset$, contradicting our assumption) and let $\mathbf{t}\in X$. We can assume, by reordering the instances in $\class{F}$ if necessary, that there exists $s\geq 0$, such that $\mathbf{t}(i)>0$ for every $i\leq s$ and $\mathbf{t}(i)=0$ for every $i>s$.

Consider the FO-sentence defined as:
 
$$\varphi=\bigwedge_{i\leq s} q^{F_i}
\land\bigwedge_{i>s} \neg q^{F_i}$$

We shall prove that $\modclass(\varphi)=\class{C}$,
and therefore $\class{C}$ admits a left query algorithm over $\bool$ (namely, the left query algorithm $(\set{F},\{\mathbf{t}'\})$ where
$\mathbf{t}'(i)=1$ for $i\leq s$ and $\mathbf{t}'(i)=0$ for $i>s$). Since $\modclass(\varphi)=\class{C}$ already holds when $s = 0$, we assume $s > 0$ in the sequel.

To this end we shall need the following two technical lemmas.

\begin{restatable}{lemma}{lepol}
\label{le:pol}
Let $\mathbf{t}_1,\dots,\mathbf{t}_r\in\nat^s$ satisfying the following conditions:
\begin{enumerate}
\item For every  $1\leq i\leq s$, there exists $1\leq j\leq r$ such that $\mathbf{t}_j(i)\neq 0$.
\item For every different $i,i'\in\{1,\dots,s\}$, there exists $1\leq j\leq r$ such that $\mathbf{t}_j(i)\neq \mathbf{t}_j(i')$.
\end{enumerate}
Then there exists some integer $c>0$ such that for every $\mathbf{b}\in\mathbb{Z}^s$ whose entries are integers divisible by $c$, there is a multivariate polynomial with integer coefficients $p(x_1,\dots,x_r)$  of degree $s$ satisfying $p(0,\dots,0)=0$ (that is, such that the independent term is zero) and such that, for each $1\leq i\leq s$, $p(\mathbf{t}_1(i),\dots,\mathbf{t}_r(i))=\mathbf{b}(i)$.
\end{restatable}
\begin{proof}
Most likely this  follows from known results but, in any case, we provide a self-contained proof. Let $\class{G}$ be the set of all linear combinations $a_1\mathbf{t}_1+\cdots a_r\mathbf{t}_r$ where each $a_i$ is a non-negative integer. It follows from condition (1) that $\class{G}$ contains a tuple where all entries are positive. Moreover it follows from (2) that $\class{G}$ contains a tuple where all entries are different and positive. Indeed, if $\mathbf{t},\mathbf{u}\in\class{G}$ and $\mathbf{v}=d\mathbf{t}+\mathbf{u}$ for $d\in\nat$ large enough, then for every different $i,i'\in\{1,\dots,s\}$, we have\\
\centerline{($\mathbf{t}(i)\neq \mathbf{t}(i')$ \ or \ $\mathbf{u}(i)\neq \mathbf{u}(i')$) \ implies \ $\mathbf{v}(i)\neq  \mathbf{v}(i')$.}
Now, let $\mathbf{u} = a_1\mathbf{t}_1+\cdots+a_r\mathbf{t}_r$ be any tuple where all entries are different and positive. It is easy to see that the $(s\times s)$-matrix $M$ whose rows are  $\mathbf{u}^1,\mathbf{u}^2,\dots,\mathbf{u}^s$ is non-singular. Indeed, if  $\mathbf{u}=(u_1,\dots,u_s)$, then $\text{det}(M)=u_1\cdots u_s\cdot \text{det}(N)$, where $N$ is the $(s\times s)$-matrix with rows 
$\mathbf{u}^0,\mathbf{u}^1,\dots,\mathbf{u}^{s-1}$ which is Vandermonde. It is well known that Vandermonde matrices are non-singular whenever $\mathbf{u}$ has no repeated elements (see \cite{Kalman84} for example). 

To finish the proof we choose $c$ to be $|\text{det}(M)|$.  By assumption all entries of $\mathbf{b}$ are divisible by $c$ which implies that all entries of $M^{-1}\textbf{b}$ are integers, that is, $\mathbf{b}$ can be expressed as $e_1 \mathbf{u}^1+\cdots+e_s\mathbf{u}^s$ for some $e_1,\dots,e_s\in\mathbb{Z}$.
Hence, the polynomial $p(x_1,\dots,x_r)=e_1 y^1+\cdots+e_s y^s$ where $y=a_1x_1+\cdots + a_rx_r$ satisfies the claim.
\end{proof}

\begin{restatable}{lemma}{lelovasz}
\label{le:lovasz}
Let $F,F'$ be instances such that there is no surjective homomorphism from $F$ onto $F'$. Then there exists some subinstance $H$ of $F'$ such that 
$\hom_{\nat}(F,H) \neq \hom_{\nat}(F',H)$.
\end{restatable}

\begin{proof}
This is a natural adaptation of the Lov\'asz's proof that two instances are isomorphic if and only if they have the same left homomorphism-count vector.

For every instance $G$, we write $\sur_\nat(G, F')$ for the number of surjective homomorphisms from $G$ onto $F'$; moreover, for every subset $S \subseteq \adom(F')$, we write $\hom_\nat^S(G, F')$ for the number of homomorphisms $h: G \homto F'$  whose range is contained in $S$. Let $n \defas \size{\adom(F')}$. By the Inclusion-Exclusion Principle, then
\[
\sur_\nat(G, F') = \sum_{S \subseteq \adom(F')} (-1)^{n - \size{S}} \hom_\nat^S(G, F').
\]
Since $\sur_\nat(F, F') = 0$ and $\sur_\nat(F', F') > 0$, we have $\sur_\nat(F, F') \neq \sur_\nat(F', F')$ and it follows by the above discussion that there is a subset $S \subseteq \adom(F')$ such that $\hom_\nat^S(F, F') \neq \hom_\nat^S(F', F')$. Let $H$ be the maximum subinstance of $F'$ with $\adom(H) \subseteq S$, then we have $\hom_\nat(F, H) = \hom_\nat^S(F, F') \neq \hom_\nat^S(F', F') = \hom_\nat(F', H)$.
\end{proof}

Let us now continue with the proof that 
$\modclass(\varphi)=\class{C}$.
The direction $\supseteq$ is immediate. For the converse, we must prove that every $B\in\modclass(\varphi)$ belongs also to $\class{C}$. 
Let $B\in\modclass(\varphi)$, and 
choose an arbitrary $A \in \class{C}$ (by the assumption that $\class{C} \neq \emptyset$).
Let $\mathbf{t}^A = \hom_\nat(\set{F}, A)$ and $\mathbf{t}^B = \hom_\nat(\set{F}, B)$. Note that $\mathbf{t}^A(i)=\mathbf{t}^B(i)=0$ for every $i>s$ and $\mathbf{t}^A(i)$ and $\mathbf{t}^B(i)$ are strictly positive for every $i\leq s$.

Let $\class{H}=\{H_1,\dots,H_r\}$ be the non-empty set of instances constructed as follows:
\begin{enumerate}
    \item[(i)] For every $1\leq i\leq s$, $F_i$ is contained in $\class{H}$.
    \item[(ii)] For every $i,i'\in\{1,\dots,s\}$ such that there is no surjective homomorphism $F_i\rightarrow F_{i'}$, $\class{H}$ contains the instance $H$ given by Lemma \ref{le:lovasz}.
\end{enumerate}

For each $1\leq i\leq r$, let $\mathbf{t}_i = \hom_\nat(\sete{F_1, \etl, F_s}, H_i)$.
It can be readily verified that $\mathbf{t}_1,\dots,\mathbf{t}_r$ satisfy the conditions of Lemma \ref{le:pol}.
Condition (1) is guaranteed due to step (i) in the construction of $\class{H}$.
For condition (2), let $i,i'$ be any pair of different integers in $\{1,\dots,s\}$. Since $F_i$ and $F_{i'}$ are not isomorphic, it must be the case that there is no surjective homomorphism $F_i\rightarrow F_{i'}$ or there is no surjective homomorphism $F_{i'}\rightarrow F_i$. Hence, due to step (ii), $\class{H}$ contains some instance $H_j$ witnessing $\mathbf{t}_j(i)\neq \mathbf{t}_j(i')$. 

Let $c>0$ be given by Lemma \ref{le:pol}, let $\mathbf{b}\in\mathbb{Z}^s$ where $\mathbf{b}(i)=c(\mathbf{t}^B(i)-\mathbf{t}^A(i))$ for every $1\leq i\leq s$, and let $p(x_1,\dots,x_r)$ be the polynomial given by Lemma \ref{le:pol} for $\mathbf{b}$. We can express $p(x_1,\dots,x_r)$ as $p_A(x_1,\dots,x_r)-p_B(x_1,\dots,x_r)$ where all the coefficients in $p_A$ and $p_B$ are positive.

For every polynomial $q(x_1,\dots,x_r)$ where the independent term is zero, consider the instance $H_q$ defined inductively as follows:
\begin{itemize}
    \item If $q=x_j$, then $H_q$ is $H_j$.
\item If $q=u+v$, then $H_q$ is $H_u\oplus H_v$.
\item If $q=u\cdot v$, then $H_q=H_u\otimes H_v$.
\end{itemize}

\begin{lemma}
\label{le:Hq}
Let $H_q$ be constructed as above. Then:
\begin{enumerate}
    \item $H_q\rightarrow F_1\oplus\cdots\oplus F_s$.
    \item Let $F$ be a connected instance and let $\mathbf{t} = \hom_\nat(F, \class{H})$. Then $\hom_{\nat}(F,H_q)=q(\mathbf{t})$.
\end{enumerate}
\end{lemma}

This lemma directly follows from the definition of $H_q$. More precisely, the first item follows from the fact that $H_j\rightarrow F_1\oplus\cdots\oplus F_s$ for every $H_j\in\class{H}$ (as can be shown by a straightforward
induction argument). The second item follows from the definition of $H_q$ and Proposition \ref{prop:hom-combinatorics}.

Let $A'=A_1\oplus\cdots\oplus A_c\oplus H_{p_A}$ where $A_1,\dots,A_c$ are disjoint copies of $A$. Similarly, define $B'=B_1\oplus\cdots\oplus B_c\oplus H_{p_B}$ where
$B_1,\dots,B_c$ are disjoint copies of $B$. We claim that $\mathbf{t}^{A'} = \hom_\nat(\set{F}, A')$ and $\mathbf{t}^{B'} = \hom_\nat(\set{F}, B')$ coincide.

Let $1\leq i\leq k$ and consider first the case $i>s$. It follows from Lemma \ref{le:Hq}(1) that $F_i\not\rightarrow H_{p_A}$. Indeed, if $F_i\rightarrow H_{p_A}$, then $F_i\rightarrow F_{i'}$ for some $i'\leq s$ implying that $\class{C}=\emptyset$, contradicting our assumption. Similarly $F_i\not\rightarrow H_{p_B}$.
Consequently,
$\mathbf{t}^{A'}(i)=c\cdot \mathbf{t}^A(i)=0=c\cdot \mathbf{t}^B(i)=\mathbf{t}^{B'}(i)$. For the other case, namely $i \leq s$, we have
    \begin{align*}
    \mathbf{t}^{B'}(i)-\mathbf{t}^{A'}(i) &=\hom_{\nat}(F_i,B')-\hom_{\nat}(F_i,A') \\
    &=c\cdot \hom_{\nat}(F_i,B)-c\cdot \hom_{\nat}(F_i,A'))+\hom_{\nat}(F_i,H_{p_B})-\hom_{\nat}(F_i,H_{p_A}) \\
    &=c\cdot (\mathbf{t}^B(i)-\mathbf{t}^A(i))+\hom_{\nat}(F_i,H_{p_B})-\hom_{\nat}(F_i,H_{p_A}) \\
    &=c\cdot (\mathbf{t}^B(i)-\mathbf{t}^A(i))+p_B(\mathbf{t}_1(i),\dots,\mathbf{t}_r(i))-p_A(\mathbf{t}_1(i),\dots,\mathbf{t}_r(i)) \\
    &=c\cdot (\mathbf{t}^B(i)-\mathbf{t}^A(i))-p(\mathbf{t}_1(i),\dots,\mathbf{t}_r(i)) \\
    &=c\cdot (\mathbf{t}^B(i)-\mathbf{t}^A(i))-\mathbf{b}(i)=0,
    \end{align*}
    where $\hom_{\nat}(F_i,H_{p_A})=p_A(\mathbf{t}_1(i),\dots,\mathbf{t}_r(i))$ and $\hom_{\nat}(F_i,H_{p_B})=p_B(\mathbf{t}_1(i),\dots,\mathbf{t}_r(i))$ hold
    by Lemma \ref{le:Hq}(2).

To finish the proof, note that since $\mathbf{t}^A(i)>0$ for every $1\leq i\leq s$ it follows by Lemma \ref{le:Hq}(1) that $A'\rightarrow A$, and, hence $A\leftrightarrow A'$. Similarly  we have $B'\leftrightarrow B$. Since $\class{C}$ is closed under homomorphic equivalence we have that $A'\in\class{C}$. Since $\mathbf{t}^{A'}=\mathbf{t}^{B'}$ it follows that $B'$ and, hence, $B$ belong to $\class{C}$ as well.
\end{proof}    

\begin{corollary}\label{cor:lqa-nat-fo}
Let $\class{C}$ be a class of instances closed under
homomorphic equivalence. Then the following statements are equivalent.
\begin{enumerate}
    \item $\class{C}$ admits a left query algorithm over $\nat$.
    \item $\class{C}$ admits a left query algorithm over $\bool$.
    \item $\class{C}$ is FO-definable.
\end{enumerate}
\end{corollary}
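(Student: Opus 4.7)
The plan is to derive this corollary by assembling two results already established in the paper, so the proof will be essentially a bookkeeping exercise rather than a fresh combinatorial argument. The heavy lifting has been done: Theorem~\ref{thm:lqa-nat-fo1} supplies the ``counting does not help'' direction under the hypothesis that $\class{C}$ is closed under homomorphic equivalence, and Corollary~\ref{cor:lqa-for-fo-classes} (itself a specialization of Theorem~\ref{thm:char-left-alg-bool}) supplies the bridge between existence of a left query algorithm over $\bool$ and FO-definability.

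First I would dispose of the equivalence $(2)\Leftrightarrow(3)$ by a one-line appeal to Corollary~\ref{cor:lqa-for-fo-classes}, using precisely the assumption that $\class{C}$ is closed under homomorphic equivalence. Next, for $(2)\Rightarrow(1)$, I would quote Proposition~\ref{prop:query_alg_bool_implies_bag}, which unconditionally upgrades a left query algorithm over $\bool$ to one over $\nat$ with the same set $\set{F}$.

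The only direction that requires a small extra step is $(1)\Rightarrow(2)$. Here I would start from an arbitrary left query algorithm $(\set{F},X)$ over $\nat$ for $\class{C}$, apply Proposition~\ref{prop:connected} to replace $\set{F}$ by a set of connected instances while preserving the existence of the algorithm (and still operating over $\nat$), and then invoke Theorem~\ref{thm:lqa-nat-fo1} with this connected $\set{F}$ to produce a left query algorithm over $\bool$ of the form $(\set{F},X')$ for $\class{C}$. The hypothesis that $\class{C}$ is closed under homomorphic equivalence is what licenses the use of Theorem~\ref{thm:lqa-nat-fo1}.

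I do not anticipate an obstacle here: the only minor subtlety is remembering to pass through Proposition~\ref{prop:connected} before invoking Theorem~\ref{thm:lqa-nat-fo1}, since the latter is stated only for finite sets of \emph{connected} instances. All three implications together close the cycle $(1)\Rightarrow(2)\Rightarrow(3)\Rightarrow(2)\Rightarrow(1)$, or more economically $(2)\Leftrightarrow(3)$ and $(1)\Leftrightarrow(2)$, completing the proof.
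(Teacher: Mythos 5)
Your proposal is correct and matches the paper's own proof: the paper likewise obtains the equivalence of (1) and (2) from Theorem~\ref{thm:lqa-nat-fo1} together with Proposition~\ref{prop:connected}, and the equivalence of (2) and (3) from Corollary~\ref{cor:lqa-for-fo-classes}. You correctly flag the only subtlety, namely passing through Proposition~\ref{prop:connected} before applying Theorem~\ref{thm:lqa-nat-fo1}, which requires $\set{F}$ to consist of connected instances.
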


The equivalence of statements (1) and (2) follows from Theorem~\ref{thm:lqa-nat-fo1} together with Proposition~\ref{prop:connected}.
The equivalence of statements (2) and (3) was already established in 
Corollary~\ref{cor:lqa-for-fo-classes}.

We would like to point out some interesting
special cases of Theorem~\ref{thm:lqa-nat-fo1}.
The first pertains to CSPs. Let us say that
a constraint satisfaction problem $\csp(B)$
is ``\emph{determined by $\hom_K(\set{F},\cdot)$}''
for some $K\in\{\bool,\nat\}$ and some class $\set{F}$ of
instances, if the following holds
for all instances $A$ and $A'$: if $\hom_K(\set{F},A)=\hom_K(\set{F},A')$
then $A\in\csp(B)$ iff $A'\in\csp(B)$. 
Then Theorem~\ref{thm:lqa-nat-fo1} can be rephrased as follows: 
for every finite set of connected instances $\set{F}$,
every CSP determined by $\hom_\nat(\set{F},\cdot)$ is
determined by $\hom_\bool(\set{F},\cdot)$. In contrast,
for $\set{T}$  the infinite class of all trees,  
the CSPs determined by $\hom_\bool(\set{T},\cdot)$ 
form a proper subclass of those determined by $\hom_\nat(\set{T},\cdot)$.%
\footnote{
Note that the definitions of $\hom_\nat(\set{F},\cdot)$ and $\hom_\bool(\set{F},\cdot)$ extend naturally to infinite classes $\class{F}$.}
This follows from results in \cite{Butti2021:fractional},
because the former are precisely the CSPs that can be solved using
arc-consistency, while the latter are precisely the CSPs that 
can be solved using basic linear programming relaxation (BLP).
See~\cite[Example 99]{Kun2016:new} for an example of a CSP that
can be solved using BLP but not using arc-consistency.

The second special case
pertains  to homomorphic-equivalence classes. Given the 
importance of homomorphic equivalence as a notion of 
equivalence in database theory, it is natural to ask
when a database instance $A$ can be uniquely identified 
up to homomorphic equivalence by means of a 
left query algorithm. As we saw earlier, in Section~\ref{sec:left_alg_bool}, for left query algorithm over
$\bool$, this is the case if and only if $\csp(A)$ is
FO-definable (a condition that can be tested effectively, and,
in fact, is NP-complete to test).
It follows from Corollary~\ref{cor:lqa-nat-fo}
that the same criterion determines whether $A$ can be uniquely
identified up to homomorphic equivalence by means of a
left query algorithm over $\nat$. This extends naturally to 
finite unions of homomorphic equivalence classes.

Finally, let us consider again classes $\class{C}$ 
defined by a Boolean Datalog program $P$. 
It follows from the results we mentioned in 
Section~\ref{sec:left_alg_bool} that 
such a class of instances $\class{C}$ admits a left query algorithm over $\nat$
if and only if $P$ is bounded, and that testing
for the existence of a left query algorithm over $\nat$ is 
undecidable.

\section{Right Query Algorithms}\label{sec:right_alg}

Just as for left query algorithms, 
we have that 
every class of instances that admits a right query algorithm over $\bool$ is closed under homomorphic equivalence. However, 
\emph{unlike}  left query algorithms, 
a class of instances that admits a right query algorithm over 
$\bool$ is not necessarily FO-definable. Concretely, as we saw in Example \ref{exam:csp}, the class of 3-colorable graphs admits a  right query algorithm over $\bool$, but is not FO-definable.  In fact, 
\emph{every} constraint satisfaction problem $\csp(A)$ admits a right query algorithm over $\bool$, and the FO-definable ones are precisely those that admit a left query algorithm over $\bool$. The next result is straightforward.

\begin{proposition}\label{thm:query_alg_bool_right}
Let $\class{C}$ be a class of instances. Then $\class{C}$ admits a right query algorithm over $\bool$ if and only if $\class{C}$ is definable by a Boolean combination of CSPs.
\end{proposition}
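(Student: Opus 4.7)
The plan is to mirror the proof of the equivalence $(1)\Longleftrightarrow(2)$ in Theorem~\ref{thm:char-left-alg-bool}, but swapping the role of CQs with that of CSPs. The key observation is that, for any instance $A$ and any instance $F$, we have $\hom_\bool(A,F)=1$ if and only if $A\in\csp(F)$. Thus, membership in $\csp(F)$ and the existence of a homomorphism from an input $A$ to $F$ are two names for the same condition, and right query algorithms over $\bool$ are built precisely out of such conditions.

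For the direction $(\Longrightarrow)$, suppose $\class{C}$ admits a right query algorithm over $\bool$, say $(\set{F},X)$ with $\set{F}=\{F_1,\ldots,F_k\}$ and $X\subseteq\{0,1\}^k$. For every $t=(t_1,\ldots,t_k)\in X$, form the class
\[
\class{C}_t \defas \bigcap_{t_i=1}\csp(F_i)\ \cap\ \bigcap_{t_i=0}\bigl(\text{complement of }\csp(F_i)\bigr).
\]
By the observation above, $A\in\class{C}_t$ iff $\hom_\bool(A,\set{F})=t$. Hence $\class{C}=\bigcup_{t\in X}\class{C}_t$, which is a Boolean combination of CSPs.

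For the converse $(\Longleftarrow)$, each class of the form $\csp(B)$ admits a right $1$-query algorithm over $\bool$ (Example~\ref{exam:csp}, with $\set{F}=\{B\}$ and $X=\{1\}$). As noted in Section~\ref{sec:left_and_right_alg_bool}, the collection of classes admitting a right query algorithm over $\bool$ is closed under Boolean combinations (complement, union, intersection). Consequently, any Boolean combination of CSPs admits a right query algorithm over $\bool$.

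There is no genuine obstacle here: the entire argument is a direct transposition of the CQ/left-side case, using the fact that $\csp(F)$ plays on the right the same syntactic role that the canonical query $q^F$ plays on the left. The only thing worth being careful about is ensuring that the set $X$ of ``accepting'' right profiles is handled in the same disjunctive normal form manner as in the proof of Theorem~\ref{thm:char-left-alg-bool}.
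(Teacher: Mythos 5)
Your proof is correct, and it is precisely the ``straightforward'' argument the paper alludes to without spelling out: the observation that $\hom_\bool(A,F)=1$ iff $A\in\csp(F)$, together with the disjunctive-normal-form decomposition of $X$ and closure of right query algorithms over $\bool$ under Boolean combinations, is exactly the intended mirror of the $(1)\Longleftrightarrow(2)$ argument in Theorem~\ref{thm:char-left-alg-bool}.
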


We also saw in Section~\ref{sec:left_and_right_alg_bool}
that not every homomorphic-equivalence closed FO-definable class
admits a right query algorithm over $\bool$. For example, 
the class of triangle-free graphs does not admit a
right query algorithm  over $\bool$ (or even over $\nat$). This raises the question
which homomorphic-equivalence closed FO-definable classes admit a
right query algorithm over $\bool$. Equivalently, 
\emph{which classes $\class{C}$ that admit a 
left query algorithm over $\bool$,  admit  a right query algorithm over $\bool$}?
We will address this question next
by making use of two known results.

\begin{theorem}[Sparse Incomparability Lemma, \cite{kun2013constraints}]\label{thm:sparse}
Let $m, n \geq 0$. For every instance $B$ there is an instance $B^*$ of girth at least $m$ such that, for all instances $D$ with $\size{\adom(D)} \leq n$, we have $B \in \csp(D)$ if and only if $B^* \in \csp(D)$.
\end{theorem}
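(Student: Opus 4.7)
The plan is to build $B^*$ probabilistically. Note first that to secure the easy direction ($B \in \csp(D)$ implies $B^* \in \csp(D)$), it suffices that $B^* \to B$, for then one composes with any homomorphism $B \to D$. For the hard direction, observe that since the schema is fixed and $|\adom(D)| \leq n$, up to isomorphism there are only finitely many relevant $D$; so the condition \emph{$B^* \not\to D$ for all $D$ in this finite family with $B \not\to D$} is a finite conjunction, amenable to a union bound.

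I would build $B^*$ as a sparse random substructure of a high direct power of $B$. Concretely, fix a large integer $N$ and a small probability $p = p(N)$, let $B^N$ denote the $N$-fold direct product $B \dprod \cdots \dprod B$, and let $R$ be obtained from $B^N$ by retaining each fact independently with probability $p$. Projection to any fixed coordinate yields a canonical homomorphism $R \to B$. The expected number of cycles in $R$ of length less than $m$ is polynomial in $N$ and in $p^{\ell}$ for $\ell < m$; by a first-moment argument, for appropriately chosen $p$, with positive probability $R$ has only a small bounded number of short cycles, which can then be destroyed by deleting one fact per short cycle to yield $B^*$ of girth at least $m$, preserving the homomorphism to $B$.

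The heart of the argument is to bound, for each bad $D$ (with $|\adom(D)| \leq n$ and $B \not\to D$), the probability that $R \to D$. The key observation is that $B \not\to D$ implies $B^N \not\to D$, because composing any such homomorphism with the diagonal $B \to B^N$ would yield $B \to D$. Hence any homomorphism from the random fragment $R$ to $D$ must ``avoid'' many facts of $B^N$, each surviving independently with probability $p$; a probabilistic counting argument then bounds this probability by a quantity shrinking rapidly in $N$. Summing via the union bound over the finitely many bad $D$, this probability remains small, leaving enough slack to absorb the short-cycle cleanup.

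The main obstacle is the simultaneous calibration of parameters: $p$ must be small enough to suppress short cycles and to make homomorphisms to bad $D$ rare, yet not so small that the cleanup step (which \emph{removes} constraints and can therefore only introduce new homomorphisms to bad $D$) opens up homomorphisms absent from the original random fragment. This calibration is the technical content of~\cite{kun2013constraints}; the argument succeeds because the number of cleanup deletions is negligible compared to the total number of facts, so that with positive probability neither the girth condition nor any of the finitely many bad-$D$ conditions break after cleanup, yielding the desired $B^*$.
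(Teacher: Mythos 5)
The paper does not prove this theorem; it is stated as a known result with a citation to~\cite{kun2013constraints} and used as a black box. So the only thing to assess is whether your sketch would work as a self-contained reconstruction, and whether it matches the route taken in the cited source.

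Your outline (sparse random substructure mapping onto $B$, union bound over the finitely many non-isomorphic bad $D$ of bounded size, first-moment bound on short cycles followed by per-cycle deletion) is the right high-level shape of the classical probabilistic proof of the Sparse Incomparability Lemma, which goes back to Ne\v{s}et\v{r}il and R\"odl. However, there is a concrete gap. You base the construction on the $N$-fold direct product $B^N$, whose domain has size $|\adom(B)|^N$. The union bound then ranges over $n^{|\adom(B)|^N}$ maps $f:\adom(B^N)\to\adom(D)$, a doubly exponential number, so you would need each non-homomorphism $f$ to violate an $\Omega(1)$-fraction of the (also doubly exponential) number of facts of $B^N$, and you give no argument for this. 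That estimate is exactly where the real work lies, and it is far from clear it holds for the direct product. The standard construction instead uses a \emph{blowup}: the domain is $\adom(B)\times\{1,\ldots,N\}$ (linear in $N$), and for each fact $(a_1,\ldots,a_r)\in R^B$ one includes a random sparse subset of all $N^r$ lifted tuples $((a_1,j_1),\ldots,(a_r,j_r))$. For this construction a pigeonhole ``majority colour'' argument works: given any $f:\adom(B)\times[N]\to\adom(D)$, each $a\in\adom(B)$ has a colour $c(a)$ attained on at least $N/n$ indices; since $B\not\to D$, the map $c$ violates some fact of $B$, and all $\geq(N/n)^r$ lifts of that fact along majority-coloured indices are violated by $f$. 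This yields the exponentially small per-map probability needed for the union bound over the $n^{|\adom(B)|\cdot N}$ maps. Your sketch substitutes a different base object and never supplies the analogous ``positive fraction of facts is violated'' lemma, so as written the hard direction is not established.

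Separately, note that~\cite{kun2013constraints} does not use the probabilistic argument at all: the point of Kun's paper is precisely to give a \emph{deterministic, explicit} construction (via expander-like combinatorial structures) in order to derandomize the Feder--Vardi reduction. So even a corrected probabilistic argument would be a genuinely different route from the cited source, recovering the earlier Ne\v{s}et\v{r}il--R\"odl existence proof rather than Kun's constructive one. Finally, a small remark: the theorem as stated does not itself require $B^*\to B$; your construction gives it (and that is the usual stronger form), but you should be aware you are proving a slightly stronger statement than asked.
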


A \emph{finite homomorphism duality}  is a pair of
finite sets $(\set{F},\set{D})$ such that, 
for every instance $A$, the following are
equivalent: (i) $F\to A$ for some $F\in\set{F}$; (ii) $A\not\to D$ for all $D\in \set{D}$.  
We make use of the following
known characterization of this notion.

\begin{theorem}[\cite{foniok2008generalised}]\label{thm:foniok}
Let $A$ be an instance. Then the following statements are equivalent.
\begin{enumerate}
\item $A$ is homomorphically equivalent to an acyclic instance.
\item There is a finite homomorphism duality
$(\class{F},\class{D})$ with $\class{F}=\{A\}$.
\item There is a finite homomorphism duality
$(\class{F},\class{D})$ with $\class{F}=\{A\}$ and
where $\class{D}$ consists of instances $D$ for which
$\csp(D)$ is FO-definable.
\end{enumerate}
\end{theorem}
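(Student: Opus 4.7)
The strategy is to prove $(3) \Rightarrow (2) \Rightarrow (1) \Rightarrow (3)$. The first implication is immediate, since (3) is just (2) together with an extra condition on $\class{D}$. The nontrivial content lies in the other two implications; the plan is to use Theorem~\ref{thm:sparse} for $(2) \Rightarrow (1)$ and the Nešetřil--Tardif construction of a dual for an acyclic instance for $(1) \Rightarrow (3)$.

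For $(2) \Rightarrow (1)$, suppose $(\{A\}, \class{D})$ is a finite duality. Since a duality depends only on the homomorphic-equivalence class of $A$, I may assume $A$ is a core. I plan to apply Theorem~\ref{thm:sparse} to $A$ with parameters $n = \max(\size{\adom(A)}, \max_{D \in \class{D}} \size{\adom(D)})$ and with $m$ chosen larger than any possible cycle length in $A$. This produces an instance $A^*$ of girth at least $m$ such that, for every $B$ with $\size{\adom(B)} \leq n$, we have $A \to B$ iff $A^* \to B$. Specializing to $B = A$ gives $A^* \to A$; specializing to each $D \in \class{D}$ transfers the fact that $A \not\to D$ (which follows from $A \to A$ via the duality) to $A^* \not\to D$. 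Applying the duality in the other direction, now with $B = A^*$, yields $A \to A^*$, so $A \homeq A^*$. Because $A$ is a core and $A \homeq A^*$, $A$ is isomorphic to the core of $A^*$, which is itself a subinstance of $A^*$ (being a retract). Hence any cycle in $A$ would lift to a cycle of the same length in $A^*$, contradicting the girth bound. Therefore $A$ is acyclic.

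For $(1) \Rightarrow (3)$, the plan is to build $\class{D}$ componentwise. Since the duality depends only on the hom-equivalence class of $A$, I may assume $A$ itself is acyclic, and write $A = A_1 \dsum \etc \dsum A_k$ with each $A_i$ a connected acyclic instance. For each such $A_i$, the Nešetřil--Tardif theorem yields a single-element dual $D_i$: for every instance $B$, $A_i \to B$ iff $B \not\to D_i$. Set $\class{D} = \{D_1, \dots, D_k\}$. By Proposition~\ref{prop:hom-combinatorics}(3), $A \to B$ iff $A_i \to B$ for every $i$, which gives the duality $(\{A\}, \class{D})$. Moreover, for each $i$, $\csp(D_i) = \{B \suchthat B \to D_i\} = \{B \suchthat A_i \not\to B\} = \modclass(\neg q^{A_i})$ is FO-definable, establishing~(3).

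The main obstacle is the Nešetřil--Tardif construction of a single-element dual for a connected acyclic instance; this is a delicate combinatorial argument (originally for trees in the graph setting, later extended to general relational structures). Once that result is available as a black box, the remaining ingredients---the direct-sum decomposition on the left, the conversion of a single-element duality into FO-definability of $\csp(D_i)$ via $\neg q^{A_i}$, and the core-plus-SIL argument for $(2) \Rightarrow (1)$---are routine assembly.
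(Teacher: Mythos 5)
Your proposal is correct and goes well beyond what the paper actually writes down for this theorem. The paper cites \cite{foniok2008generalised} wholesale for the equivalence of (1) and (2), and only supplies the short inference $(2)\Rightarrow(3)$: from a duality $(\class{F},\class{D})$ it observes that $\bigcup_{D\in\class{D}}\csp(D)$ is defined by $\neg\bigvee_{F}q^F$, hence FO-definable, and then applies the folklore decomposition behind Proposition~\ref{prop:u-csp-left} to prune $\class{D}$ down to a set $\class{D}'$ of pairwise hom-incomparable instances, each of whose CSPs is then individually FO-definable. So the paper's route is: cite $(1)\Leftrightarrow(2)$, then \emph{thin} the given $\class{D}$.

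You instead give an actual proof of the whole cycle. Your $(2)\Rightarrow(1)$ via the Sparse Incomparability Lemma is the classical argument and is correct (the duality applied at $C=A$ gives $A\not\to D$ for all $D\in\class{D}$, SIL transports both $A^*\to A$ and $A^*\not\to D$, the duality applied at $C=A^*$ then gives $A\to A^*$, and high girth of $A^*$ plus core-embeds-as-subinstance forces $A$ acyclic). Your $(1)\Rightarrow(3)$ does not thin a given $\class{D}$; it \emph{builds} $\class{D}$ from scratch by applying the Nešetřil--Tardif single-dual theorem to each connected acyclic component $A_i$, and then the FO-definability of each $\csp(D_i)$ falls out immediately from $\csp(D_i)=\modclass(\neg q^{A_i})$, with no need for the incomparability/union lemma. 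The trade-off: you replace the paper's black box (Foniok et al., plus the union lemma) with a different black box (the Nešetřil--Tardif tree-duality construction), but in exchange the FO-definability in (3) is explicit rather than inferred. One small wording nit: ``$m$ larger than any possible cycle length in $A$'' should be ``$m$ larger than the girth of $A$'' (the girth is finite and bounded in terms of $\size{\adom(A)}$ and $\size{\block(A)}$ when $A$ is not acyclic, and if $A$ is acyclic there is nothing to prove). Otherwise the assembly is sound.
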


To see the implication
$(2) \Longrightarrow (3)$,  
note that, if $(\class{F},\class{D})$ is a finite
homomorphism duality, then $\bigcup_{D\in\class{D}}\csp(D)$ is a FO-definable class
(because it 
is defined by the negation of the UCQ $\bigvee_{F\in\class{F}}q^F$).
It follows, by the same reasoning as in the proof
of Proposition~\ref{prop:u-csp-left}, that exists
$\class{D}'\subseteq\class{D}$ such that
$(\class{F},\class{D}')$ is a finite homomorphism
duality and such that, for each $D\in \class{D}'$, we have that $\csp(D)$ is FO-definable.

Our next result characterizes the classes that
admit both a left query algorithm and a right query algorithm over $\bool$.

\begin{theorem}\label{thm:intersection-boolean}
    Let $\class{C}$ be a class of instances. Then the following statements are equivalent.
\begin{enumerate}
\item $\class{C}$ admits both a left query algorithm and a right query algorithm over $\bool$.
\item $\class{C}$ admits  a left query algorithm over $\nat$ and a right query algorithm over $\bool$.
\item $\class{C}$ is definable by a Boolean combination of Berge-acyclic CQs.
 \item $\class{C}$ is definable by a Boolean combination of FO-definable CSPs.
\end{enumerate}
\end{theorem}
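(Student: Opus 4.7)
The plan is to establish the four-way equivalence as $(1) \Leftrightarrow (2)$, $(4) \Rightarrow (1)$, $(3) \Leftrightarrow (4)$, and the main direction $(1) \Rightarrow (3)$. For $(1) \Leftrightarrow (2)$, the forward direction is Proposition~\ref{prop:query_alg_bool_implies_bag}; for the converse, note that any class admitting a right query algorithm over $\bool$ is automatically closed under homomorphic equivalence (since $\homeq$-equivalent instances have identical right profiles), so Corollary~\ref{cor:lqa-nat-fo} upgrades an LQA over $\nat$ to one over $\bool$. For $(4) \Rightarrow (1)$, every FO-definable $\csp(B)$ admits an LQA over $\bool$ by Corollary~\ref{cor:lqa-for-fo-classes} and trivially admits an RQA over $\bool$ (by taking $\set{F}=\{B\}$, $X=\{1\}$); closure of both kinds of algorithms under Boolean combinations completes the argument.

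For $(3) \Leftrightarrow (4)$, both directions flow from Theorem~\ref{thm:foniok}. Given a Berge-acyclic CQ $q$, its canonical instance $A^q$ is acyclic, so the theorem supplies a finite dual $\set{D}$ in which each $\csp(D)$ is FO-definable, giving $\modclass(q) = \overline{\bigcup_{D \in \set{D}} \csp(D)}$ as a Boolean combination of FO-definable CSPs. Conversely, every FO-definable $\csp(D)$ has, by a standard consequence of Rossman's homomorphism preservation theorem together with the theory of finite dualities, a finite left-dual $\set{F}$ consisting of acyclic instances, so $\csp(D) = \overline{\bigvee_{F \in \set{F}} q^F}$ is a Boolean combination of Berge-acyclic CQs; closure under Boolean combinations handles the remaining cases.

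For the main direction $(1) \Rightarrow (3)$, fix an LQA $(\set{F}_L, X_L)$ and an RQA $(\set{F}_R, X_R)$ for $\class{C}$, with $\set{F}_L = \{G_1, \ldots, G_n\}$ (WLOG each $G_j$ connected and a core) and $\set{F}_R = \{F_1, \ldots, F_k\}$. By Theorem~\ref{thm:char-left-alg-bool}, $\class{C}$ is a Boolean combination of the CQs $q^{G_j}$, so it suffices to show each $G_j$ is homomorphically equivalent to an acyclic instance. The plan is to proceed by contradiction: if some $G_{j_0}$ is not, then by Theorem~\ref{thm:foniok} the singleton $\{G_{j_0}\}$ has no finite dual, and combining this with the Sparse Incomparability Lemma (Theorem~\ref{thm:sparse}), applied with size parameter $n \geq \max_i |\adom(F_i)|$ and a sufficiently large girth parameter, I would construct a pair of instances $A, A'$ whose right profiles in $\set{F}_R$ coincide (so that they agree on membership in $\class{C}$ via the RQA) but whose left profiles in $\set{F}_L$ differ precisely at coordinate $j_0$ in a way detected by $X_L$, contradicting the LQA.

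The main obstacle is this final construction. Preserving the right profile is exactly what SIL delivers: from a candidate witness $A$ it produces an $A^*$ of arbitrarily high girth with $A \to F_i \Leftrightarrow A^* \to F_i$ for every $F_i \in \set{F}_R$. The delicate point is making $A^*$ flip \emph{only} the $j_0$-th left coordinate, that is, $G_{j_0} \to A$ but $G_{j_0} \not\to A^*$, while $G_j \to A \Leftrightarrow G_j \to A^*$ for $j \neq j_0$, and arranging matters so that this flip moves the profile across the boundary of $X_L$. Overcoming this will likely require a careful choice of witness built from direct sums and products (cf.~Proposition~\ref{prop:hom-combinatorics}) to decouple the coordinates, together with the non-existence of a finite dual for $\{G_{j_0}\}$ to guarantee the $j_0$-flip is realizable. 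An alternative route would be to prove $(1) \Rightarrow (4)$ directly, by showing each $F_i \in \set{F}_R$ can be replaced, without changing $\class{C}$, by an instance whose CSP is FO-definable, again via SIL combined with the FO-definability of $\class{C}$.
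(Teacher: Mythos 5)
Your handling of $(1)\Leftrightarrow(2)$, $(4)\Rightarrow(1)$, and $(3)\Rightarrow(4)$ is correct and matches the paper. (Your additional $(4)\Rightarrow(3)$ argument invokes a result --- finite acyclic left-duals for FO-definable CSPs --- that is not literally Theorem~\ref{thm:foniok} and is not needed once the cycle $(1)\Rightarrow(3)\Rightarrow(4)\Rightarrow(1)$ is closed, but it is a known theorem.)

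The genuine gap is in $(1)\Rightarrow(3)$, and it is not merely an unfilled detail: the reduction you set up is the wrong one. You want to show that each connected core $G_j$ appearing in a given left query algorithm is homomorphically equivalent to an acyclic instance, and you plan to derive a contradiction from the assumption that some $G_{j_0}$ is not. But that assumption is \emph{consistent} with statement $(1)$, so no contradiction is available. For instance, if $\class{C}$ is the class of all instances, it trivially admits both kinds of algorithms, yet one may take $\set{F}_L = \{K_3\}$ with $X_L = \{0,1\}$ as a (wasteful) left query algorithm, and $K_3$ is not homomorphically equivalent to any acyclic instance. More seriously, even for a minimal or non-redundant LQA there is no reason the individual $G_j$'s must be acyclic-up-to-equivalence; the Berge-acyclic definability of $\class{C}$ is a property of the \emph{class}, not of any particular presenting set $\set{F}_L$. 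Your fallback plan (replacing each $F_i$ in $\set{F}_R$ by an FO-definable-CSP instance) runs into the same problem.

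What the paper actually does is change the \emph{formula}, not the query-algorithm instances. Start from a Boolean combination $\varphi$ of CQs defining $\class{C}$ (via Theorem~\ref{thm:char-left-alg-bool}) and a Boolean combination $\psi$ of CSPs defining $\class{C}$ (via Proposition~\ref{thm:query_alg_bool_right}). Let $\varphi'$ be obtained from $\varphi$ by replacing each CQ $q$ with the disjunction of all Berge-acyclic homomorphic images of $q$. By Theorem~\ref{thm:foniok}, $\varphi'$ is equivalent to a Boolean combination of CSPs, call it $\psi'$. Now choose $n$ at least the maximum size of a CSP in $\psi,\psi'$ and $m$ larger than the maximum size of a CQ in $\varphi$. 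For any instance $B$, SIL gives a $B^*$ of girth at least $m$ agreeing with $B$ on all those CSPs, hence on $\psi$ and $\psi'$, hence on $\varphi$ and $\varphi'$. On instances of girth at least $m$, $\varphi$ and $\varphi'$ are equivalent (a small CQ maps into a high-girth instance iff some Berge-acyclic homomorphic image does). Chaining these equivalences gives $B\models\varphi$ iff $B\models\varphi'$, so $\varphi'$ defines $\class{C}$. The crucial move you are missing is taking homomorphic images of the CQs and then letting SIL mediate between the two presentations $\varphi$ and $\psi$, rather than trying to surgically alter a fixed $\set{F}_L$ or $\set{F}_R$.
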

\begin{proof}
The equivalence of statements (1) and (2) follows from Corollary~\ref{cor:lqa-nat-fo}.  The proof of the remaining equivalences is as follows.

\noindent $(1) \Longrightarrow (3)$ 
By Theorem~\ref{thm:char-left-alg-bool} and Proposition~\ref{thm:query_alg_bool_right},
$\class{C}$ is definable by a Boolean combination $\varphi$ of CQs, as well as by a Boolean combination
$\psi$ of CSPs. 
Let $\varphi'$  be obtained from $\varphi$ by replacing 
each conjunctive query $q$ by 
the disjunction of all homomorphic images of $q$ that are Berge-acyclic.
We claim that $\varphi'$ defines $\class{C}$.

By Theorem~\ref{thm:foniok}, 
we know that $\varphi'$ is also equivalent to a Boolean combination of CSPs,
which we may call $\psi'$.
Let $n$ be the maximum size of a CSP occurring in $\psi$ and $\psi'$.
Also, let $m$ be greater than the maximum size of the CQs in $\varphi$.
Let $B$ be any instance, and let $B^*$ now be the instance given by Theorem~\ref{thm:sparse} (for $m,n$ as chosen above).
By construction, $B$ and $B^*$ agree with each other on their membership in CSPs of size at most $n$, which
include all CSPs occurring in $\psi$ as well as
$\psi'$, and therefore, $B$ and $B^*$ agree on
$\psi$ and $\psi'$. Since $\psi$ is equivalent to $\varphi$ and $\psi'$ is equivalent to $\varphi'$, this means
that $B$ and $B^*$ agree on $\varphi$ and $\varphi'$.
Furthermore, by construction, $\varphi'$ is equivalent to $\varphi$ on 
instances of girth at least $m$ (because every homomorphic image
of a CQ of size less than $m$ in such an instance must be acyclic). In particular, 
It follows that $B^*\models\varphi$ iff $B^*\models\varphi'$. 
Putting everything together, we have that
$B\models\varphi$ iff $B^*\models\varphi$
iff $B^*\models\varphi'$ iff $B\models\varphi'$.

\noindent $(3) \Longrightarrow (4)$ This follows from Theorem~\ref{thm:foniok} (for $A$ the canonical instance of $q$): we simply replace
each Berge-acyclic conjunctive query $q$ by the conjunction $\bigwedge_{D\in\class{D}} \neg \csp(D)$.

\noindent $(4) \Longrightarrow (1)$ This follows from  Theorem~\ref{thm:char-left-alg-bool} and Proposition \ref{thm:query_alg_bool_right}.
\end{proof}

Let $\class{C}$ be a  class that admits a
left query algorithm over $\bool$ or,  equivalently,  let $\class{C}$ be definable by a Boolean combination of  CQs. It follows  that $\class{C}$ admits a right query algorithm over $\bool$
if and only if $\class{C}$ is definable by a Boolean combination of \emph{Berge-acyclic} CQs.
Similarly, let $\class{C}$ be a class that admits 
a right query algorithm over $\bool$ or, equivalently,  let $\class{C}$ be definable by a Boolean combination of CSPs. It follows that $\class{C}$ admits a left query algorithm over $\bool$
if and only if $\class{C}$ is definable by a Boolean 
combination of \emph{FO-definable} CSPs.

Finally, we will consider the question
when a class of the form $\homtype{A}$ admits
a right query algorithm over $\bool$.

\begin{restatable}{theorem}{thmhomequivclassright}\label{thm:hom-equiv-class-right}
Let $A$ be an instance. Then the following statements are equivalent.
\begin{enumerate}
\item $\homtype{A}$ has a right query algorithm over $\bool$.
\item $\setm{B}{A \homto B}$ has a right query algorithm over $\bool$.
    \item $A$ is homomorphically equivalent to an acyclic instance.
\end{enumerate}
Moreover, testing whether this holds (for a given instance $A$)
is NP-complete. 
\end{restatable}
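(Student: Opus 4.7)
I would prove the equivalence of the three statements cyclically via $(3)\Rightarrow(2)\Rightarrow(1)\Rightarrow(3)$, and then handle NP-completeness separately. The first two implications will be quick consequences of earlier results. For $(3)\Rightarrow(2)$: if $A$ is homomorphically equivalent to an acyclic instance, then by Theorem~\ref{thm:foniok} there is a finite homomorphism duality $(\sete{A},\set{D})$, so the class $\setm{B}{A\homto B}=\bigcap_{D\in\set{D}}\overline{\csp(D)}$ is a Boolean combination of CSPs, and hence admits a right query algorithm over $\bool$ by Proposition~\ref{thm:query_alg_bool_right}. For $(2)\Rightarrow(1)$: I would use the decomposition $\homtype{A}=\csp(A)\intsc\setm{B}{A\homto B}$; the first factor admits the trivial right query algorithm $(\sete{A},\sete{1})$, and closure under Boolean combinations settles the implication.

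The main step is $(1)\Rightarrow(3)$, for which my plan is to invoke the Sparse Incomparability Lemma. Let $(\set{F},X)$ with $\set{F}=\sete{F_1,\etl,F_k}$ be a right query algorithm over $\bool$ for $\homtype{A}$, set $n=\max_i\size{\adom(F_i)}$, and choose an integer $m>\size{\adom(A)}$. By Theorem~\ref{thm:sparse} applied to $A$, there will be an instance $A^*$ of girth at least $m$ such that $A\in\csp(F_i)$ if and only if $A^*\in\csp(F_i)$ for each $i$; hence $\hom_\bool(A,\set{F})=\hom_\bool(A^*,\set{F})$. Since $A\in\homtype{A}$, this common profile lies in $X$, so $A^*\in\homtype{A}$ as well, meaning $A\homeq A^*$; in particular, some homomorphism $h\colon A\homto A^*$ exists. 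I would then take $A'$ to be the image of $h$, viewed as a subinstance of $A^*$. On the one hand, $A'\subseteq A^*$ inherits girth at least $m$; on the other hand, $\size{\adom(A')}\leq\size{\adom(A)}<m$, and any cycle in $A'$ has length at most $\size{\adom(A')}<m$, so $A'$ must be acyclic. Finally, $A\homto A'$ via $h$ and $A'\homto A^*\homto A$, giving $A\homeq A'$ with $A'$ acyclic, which is exactly~(3).

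For NP-completeness, I would obtain membership in NP by a standard guess-and-check: non-deterministically guess a subinstance $A'\subseteq A$ together with a homomorphism $h\colon A\homto A'$, and verify in polynomial time that $h$ is indeed a homomorphism and that $\inc(A')$ is acyclic (the direction $A'\homto A$ holds automatically by inclusion). The algorithm will be correct because if~(3) holds then the core of $A$ is a subinstance of $A$ hom-equivalent to $A$ and, being a subinstance of the acyclic witness provided by~(3), itself acyclic; it thus serves as a valid witness of size at most $\size{\adom(A)}$. The main obstacle is NP-hardness, which I would obtain by invoking the known NP-hardness of testing whether a given relational structure has a finite homomorphism duality, a condition equivalent to~(3) via Theorem~\ref{thm:foniok}.
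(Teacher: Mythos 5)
Your proof is correct, but it takes a genuinely different route for the crucial implication. The paper closes the cycle in the order $(1)\Rightarrow(2)\Rightarrow(3)\Rightarrow(1)$: the step $(1)\Rightarrow(2)$ uses the exponentiation operation $X^Y$ (which satisfies $X\to Y^Z$ iff $X\otimes Z\to Y$) to transfer a right query algorithm for $\homtype{A}$ to one for $\setm{B}{A\homto B}$, and the step $(2)\Rightarrow(3)$ proceeds by directly extracting a finite homomorphism duality $(\sete{A},\sete{B_1,\ldots,B_n})$ from the hypothetical right query algorithm, then invoking Theorem~\ref{thm:foniok}. You instead go $(3)\Rightarrow(2)\Rightarrow(1)\Rightarrow(3)$, and your key implication $(1)\Rightarrow(3)$ bypasses both the exponentiation trick and the duality-extraction argument: you apply the Sparse Incomparability Lemma (Theorem~\ref{thm:sparse}) directly to $A$ with $n$ bounding the sizes of the right-query instances and $m>\size{\adom(A)}$, observe that $A$ and the sparse replica $A^*$ have the same right profile and hence $A\homeq A^*$, and then extract an acyclic hom-equivalent instance as the image of a homomorphism $A\homto A^*$ (which has fewer than $m$ elements yet girth at least $m$). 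This is a clean and self-contained argument, and it has the aesthetic advantage of reusing Theorem~\ref{thm:sparse}, which the paper already deploys in the proof of Theorem~\ref{thm:intersection-boolean}; it avoids introducing the exponentiation construction altogether. Your $(3)\Rightarrow(2)$ (via Theorem~\ref{thm:foniok} and Proposition~\ref{thm:query_alg_bool_right}) and $(2)\Rightarrow(1)$ (via $\homtype{A}=\csp(A)\intsc\setm{B}{A\homto B}$) are both correct and essentially the easy directions. The NP-membership argument is the same as the paper's (guess an acyclic retract; the core serves as witness). For NP-hardness you gesture at ``the known NP-hardness of testing whether a structure has a finite homomorphism duality''; the paper cites a concrete source (\cite{DalmauKV02}, Theorem~6), which is about acyclicity of the core and hence is the statement you actually want; you should pin your appeal to a specific reference rather than relying on the condition being ``known,'' since routing it through Theorem~\ref{thm:foniok} otherwise risks circularity if the cited hardness result is itself proved via acyclic cores.
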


\begin{proof} {}
We will close a cycle of implications.

\noindent $(1) \Longrightarrow (2)$:
For this, we  use the exponentiation operation $X^Y$~\cite{HellNesetril2004}. This
operation has the property that $X\to Y^Z$ if and only if  $X \dprod Z\to Y$.
Assume $\setm{B}{A \homto B}$ does not admit a right query algorithm over $\bool$. Consider an arbitrary finite set of instances $\set{F} = \sete{F_1, \etl, F_k}$ (think: candidate right query algorithm for $\homtype{A}$). Since $\setm{B}{A \homto B}$ does not admit a right query algorithm over $\bool$, for the set $\set{F}^{A} \defas \sete{F_1^{A}, \etl, F_k^{A}}$ there are instances $P$ and $Q$ with $A \homto P$ and $A \not\homto Q$ such that $\hom_\bool(P, \set{F}^{A}) = \hom_\bool(Q, \set{F}^{A})$ or, equivalently, $\hom_\bool(P \dprod A, \set{F}) = \hom_\bool(Q \dprod A, \set{F})$. Let $P' \defas P \dprod A$ and let $Q' \defas Q \dprod A$. Then, by Proposition~\ref{prop:hom-combinatorics}, $P' \in \homtype{A}$ and $Q' \notin \homtype{A}$ but $\hom_\bool(P', \set{F}) = \hom_\bool(Q', \set{F})$. Therefore $\homtype{A}$ has no right query algorithm over $\bool$.

\noindent $(2) \Longrightarrow (3)$:  Assume that $\{B\mid A\to B\}$ admits a right query algorithm for $\bool$.
We will construct a finite homomorphism duality
$(\class{F},\class{D})$ with $\class{F}=\{A\}$.
It then follows by Theorem~\ref{thm:foniok} that 
$A$ is homomorphically equivalent to an acyclic instance.
Let $B_1, \ldots B_n$ be all those instances $B_i$ used by the right query algorithm
   for which it holds that $A\not\to B_i$. We claim that 
   $(\{A\},\{B_1, \ldots, B_n\})$ is a finite homomorphism duality.
   Let $C$ be any instance. If $C\to B_i$ for some $i\leq n$, then $A\not\to C$ (otherwise, by transitivity, we would have that $A\to B_i$). Conversely, if
   $A\not\to C$, then the algorithm
   must answer ``no'' on input $C$ while it answers ``yes'' on
   input $C\dsum A$. Therefore,
   one of the right-queries made by the algorithm
   must differentiate $C$ from $C\dsum A$. It is easy to see that the right-query in 
   question must consist of an instance into which $C$ maps but $A$
   does not. This instance must then be among the $B_i$, and 
    $C\to B_i$.

\noindent $(3)\Longrightarrow (1)$:  By Theorem~\ref{thm:foniok}, 
there is a finite homomorphism duality
$(\{A\}, \class{D})$. In particular,
 for all instances $C$,
we have that $C\in \homtype{A}$ if and only if $C\to A$ and 
$C\not\to D$ for all $D\in\class{D}$.

To test if a given instance is homomorphically equivalent
to an acyclic instance, it suffices to test that its
core is acyclic (equivalently, that it has an
acyclic retract). This can clearly be done in NP. The NP-hardness follows directly from
Theorem 6 in \cite{DalmauKV02}.
\end{proof}
The preceding Theorem \ref{thm:hom-equiv-class-right} can be thought of as an  analogue of Proposition~\ref{prop:positive-homtype-csp} 
for right query algorithms.
Again, this result extends to finite unions of homomorphic-equivalence classes.

\begin{restatable}{theorem}{thmhomequivclassrightunion}
\label{thm:hom-equiv-class-right-union}
For all instances $A_1, \etl, A_n$, the following statements are equivalent.
\begin{enumerate}
\item $\bunion_{1 \leq i \leq n} \homtype{A_i}$
admits a right query algorithm over $\bool$.
\item Each $\homtype{A_i}$, for $i = 1, \etl, n$, admits a right query algorithm over $\bool$.
\end{enumerate}
In particular, testing whether this holds (for given instances $A_1, \ldots, A_n$) is NP-complete.
\end{restatable}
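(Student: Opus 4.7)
The implication $(2)\Rightarrow(1)$ is immediate, since the classes of instances that admit a right query algorithm over $\bool$ are closed under Boolean combinations. For the converse, I would mirror the induction-on-$n$ strategy used in the proof of Theorem~\ref{thm:hom-equiv-class-left}, but adapted to the ``right-side'' setting. First, we may assume without loss of generality that $A_1,\ldots,A_n$ are pairwise non-homomorphically-equivalent, so that $\homtype{A_1},\ldots,\homtype{A_n}$ are pairwise disjoint. Next, we reorder so that $A_n$ is \emph{minimal} under $\homto$ among the $A_i$, i.e., $A_i\not\homto A_n$ for every $i<n$. This is the crucial sign-flip from the left case: for right profiles, the natural combining operation becomes the direct product $\dprod$, whose homomorphism behavior (per Proposition~\ref{prop:hom-combinatorics}) lets us prepend a factor $A_n$ without disturbing right homomorphism counts in a controlled way.

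For the inductive step, assume by contraposition that some $\homtype{A_j}$ (with $j \leq n$) has no right query algorithm over $\bool$. I distinguish two cases. If $\homtype{A_n}$ does admit a right query algorithm, then the inductive hypothesis applied to $A_1,\ldots,A_{n-1}$ shows that $\bunion_{i<n}\homtype{A_i}$ admits no right query algorithm, and since this class equals $\class{C}\setminus\homtype{A_n}$ (by disjointness) and right query algorithms are closed under Boolean combinations, $\class{C}$ itself admits none. If instead $\homtype{A_n}$ admits no right query algorithm, I imitate the $(1)\Rightarrow(2)$ argument in Theorem~\ref{thm:hom-equiv-class-right}: by that theorem, $\setm{B}{A_n \homto B}$ admits no right query algorithm either. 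Then, given an arbitrary candidate finite set $\set{F}=\sete{F_1,\ldots,F_k}$, passing to the exponentiated set $\set{F}^{A_n}$ produces instances $P,Q$ with $A_n \homto P$, $A_n \not\homto Q$, and $\hom_\bool(P\dprod A_n,\set{F}) = \hom_\bool(Q\dprod A_n,\set{F})$. Setting $P'\defas P\dprod A_n$ and $Q'\defas Q\dprod A_n$, a direct calculation shows $P'\homeq A_n$, hence $P'\in\class{C}$. Meanwhile $A_i\homto Q'$ iff $A_i\homto Q$ and $A_i\homto A_n$: for $i<n$ the second conjunct fails by minimality, and for $i=n$ the first fails by construction. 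Therefore $Q'\notin\class{C}$, witnessing that $\set{F}$ does not yield a right query algorithm for $\class{C}$.

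The main obstacle is calibrating the ``minimality'' condition on $A_n$ so that $Q\dprod A_n$ lands outside every $\homtype{A_i}$; this is the precise place where the argument dualizes the left-case proof (which relied on maximality together with $\dsum$). Once this is in place, the rest is algebraic bookkeeping with Proposition~\ref{prop:hom-combinatorics} and the exponentiation adjunction $X\homto Y^Z$ iff $X\dprod Z \homto Y$.

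For the complexity assertion, the equivalence of (1) and (2), combined with Theorem~\ref{thm:hom-equiv-class-right}, reduces the decision problem to testing whether each $A_i$ is homomorphically equivalent to an acyclic instance. NP membership follows by guessing, for each $i$, an acyclic retract (equivalently, an acyclic core) of $A_i$ and verifying in polynomial time. NP-hardness is inherited directly from the single-instance case $n=1$, which is NP-hard by Theorem~\ref{thm:hom-equiv-class-right}.
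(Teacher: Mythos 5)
Your proof is correct and follows essentially the same route as the paper's: induction on $n$, with the same reduction to pairwise non-homomorphically-equivalent $A_i$, the same choice of a $\homto$-minimal $A_n$, the same two-case split, and the same use of Theorem~\ref{thm:hom-equiv-class-right} together with exponentiation and direct products to produce the separating pair $P' = P\dprod A_n$, $Q' = Q\dprod A_n$ in the second case. The complexity argument is likewise the one in the paper.
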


\begin{proof} {} 
It is clear that the second statement  implies the first. We will prove by induction on $n$ that the first statement implies the second. The base case ($n = 1$) is immediate since then statements  (1) and (2) coincide. Next, let $n > 1$ and $\class{C} \defas \bunion_{1 \leq i \leq n} \homtype{A_i}$. We proceed by contraposition, assuming that $\homtype{A_i}$ does not admit a right query algorithm over $\bool$ for some $i \leq n$. We may assume without loss of generality that $A_1, \ldots, A_n$ are pairwise not homomorphically equivalent. Note that $\homto$ induces a preorder among $A_1, \etl, A_n$ and, since $n$ is finite, there is a minimal. Without loss of generality, assume that $A_n$ is a minimal, that is, $A_i \not\homto A_n$ for all $i < n$. We distinguish two cases.

(1) $\homtype{A_n}$ admits a right query algorithm over $\bool$. Then, for some $i \leq n - 1$, $\homtype{A_i}$ does not admit any right query algorithm over $\bool$. By induction hypothesis, we have $\class{C}' \defas \bunion_{1 \leq i \leq n - 1} \homtype{A_i}$ does not admit any right query algorithm over $\bool$. Then $\class{C}$ does not admit a right query algorithm over $\bool$ either, since $\class{C}' = \class{C} \setminus \homtype{A_n}$.

(2) $\homtype{A_n}$ does not admit any right query algorithm over $\bool$. By Theorem~\ref{thm:hom-equiv-class-right}, the class $\setm{B}{A \homto B}$ does not admit any right query algorithm over $\bool$, either. Consider an arbitrary finite non-empty set of instances $\set{F} = \sete{F_1, \etl, F_k}$. Since $\setm{B}{A \homto B}$ does not admit any right query algorithm over $\bool$, for the set $\set{F}^{A_n} = \sete{F_1^{A_n}, \etl, F_k^{A_n}}$ there are instances $P$ and $Q$ with $A_n \homto P$ and $A_n \not\homto Q$ such that $\hom_\bool(P, \set{F}^{A_n}) = \hom_\bool(Q, \set{F}^{A_n})$, which implies that $\hom_\bool(P \dprod A_n, \set{F}) = \hom_\bool(Q \dprod A_n, \set{F})$. It follows by Proposition~\ref{prop:hom-combinatorics} that
\begin{itemize}
\item $(P \dprod A_n) \in \homtype{A_n}$ because $A_n \homto P$,
\item $(Q \dprod A_n) \notin \homtype{A_n}$ because $A_n \not\homto Q$,
\item for all $i < n$, $(Q \dprod A_n) \notin \homtype{A_i}$ because $A_i \not\homto A_n$.
\end{itemize}
Let $P' \defas P \dprod A_n$ and let $Q' \defas Q \dprod A_n$. Then the above discussion yields that $P' \in \class{C}$ and $Q' \notin \class{C}$ while $\hom_\bool(P', \set{F}) = \hom_\bool(Q', \set{F})$. Therefore, $\class{C}$ does not admit any right query algorithm over $\bool$.
\end{proof}

\begin{remark}\label{rem:no-rqa}
We saw in Example~\ref{exam:triangle-free}
that the class of triangle-free graphs,
which clearly has a left query algorithm over $\bool$,
does
not admit a right query algorithm over $\bool$ or over $\nat$.
Observe that this class is defined by the
negation of the ``triangle'' conjunctive query  $\exists xyz (R(x,y)\land R(y,z)\land R(z,x))$. 
In light of Theorem~\ref{thm:intersection-boolean}, 
the lack of a right query algorithm over $\bool$ for this class
can be ``explained'' by the fact that this conjunctive query is
not Berge-acyclic.
Furthermore, in Example~\ref{exam:hom-equiv-K3} we
mentioned that the class $\homtype{K_3}$, that is,
the class of graphs that are 3-colorable and also 
contain a triangle, does not admit a right query algorithm over
$\bool$. This follows from Theorem~\ref{thm:hom-equiv-class-right}.
\end{remark}

We conclude this section with an open problem.

\begin{question}
Does a suitable analogue of Theorem~\ref{thm:lqa-nat-fo1} hold for right query algorithms?
\end{question}

\section{Summary and Discussion of Related Topics}\label{sec:conclus}

Inspired by the work of Chen et al.~\cite{chen2022algorithms}, 
we extended their framework and studied various types of
query algorithms, where a query algorithm for a class $\class{C}$ of instances determines whether a given input instance belongs to $\class{C}$  by making a finite number of (predetermined) queries that  ask for the existence of
certain homomorphisms or for the number of certain homomorphisms.
Specifically, we introduced
and studied \emph{left query algorithms} and \emph{right query
algorithms} over $\bool$, as well as   over $\nat$. 
Our results delineate the differences in expressive power between these four types of 
query algorithms. In particular, they pinpoint when
the ability to count homomorphisms is essential for 
the existence of left query algorithms.

\smallskip

\noindent{\bf Relationship to view determinacy}~
Recently, Kwiecien et al.~\cite{Kwiecien2022:determinacy}
studied view determinacy under  bag semantics.
In particular, they obtained a decidability result
for determinacy with respect to 
Boolean views under
 bag-set semantics.
We will briefly describe their framework and relate it to ours. At the most abstract level, a 
\emph{view} is simply a function $f$ that
takes as input a database instance. Specifically, 
under  set semantics, every Boolean CQ specifies
a view that is a function from 
database instances to $\{0,1\}$, 
while, under  bag-set semantics, 
every Boolean CQ specifies a view that is a function from
database instances to $\nat$. 
We say that a finite collection of views $f_1, \ldots, f_k$
\emph{determines} a view $g$, if
for all database instances $A$ and $B$,
if $f_i(A)=f_i(B)$ for all $i\leq k$, then $g(A)=g(B)$. 
The aforementioned result from~\cite{Kwiecien2022:determinacy} asserts
that the following problem is decidable:
given views $f_1, \ldots, f_k$ and $g$
specified by Boolean CQs under  bag-set
semantics, is
$g$ is determined by $f_1, \ldots, f_k$? 

We now describe the relationship between the above notion of view determinacy and our framework.
Let $\class{C}$ be a class of instances and let $\class{F}=\{F_1, \ldots, F_k\}$ be
a finite set of  instances. 
Then the following are equivalent:
\begin{enumerate}
    \item There exists a set $X$ such that 
    $(\class{F},X)$ is a left query algorithm over $\nat$ for $\class{C}$,
    \item $f_1, \ldots, f_k$ determine $g_{\class{C}}$
where $f_i(A) = \hom_\nat(F_i,A)$ and 
$g_{\class{C}}$ is the indicator function of $\class{C}$
(i.e., $g_{\class{C}}(A)=1$ if $A\in\class{C}$ and  $g_{\class{C}}(A)=0$ otherwise).
\end{enumerate}
This tells that there are some important differences between our
framework and the one in \cite{Kwiecien2022:determinacy}:
(i) when we study the existence of left query algorithms,
the set $\class{F}$ is not fixed, whereas,
in the view determinacy problem, the views are given as 
part of the input; (ii) in the view determinacy problem
studied in \cite{Kwiecien2022:determinacy}, the view $g$
is specified by a CQ with bag-set semantics, whereas
in our case $g$ is a Boolean-valued function since it is
the indicator function of a class of instances,
(iii) we do not assume that the
class $\class{C}$ is specified by a Boolean CQ. Indeed, if
$\class{C}$ were specified by a Boolean CQ $q$, then a left query 
algorithm would trivially exist, where $\class{F}$ simply
consists of (the canonical instance of) $q$ itself.

\smallskip

\noindent{\bf Other semirings}~
Query algorithm over $\bool$
and query algorithm over $\nat$ can be viewed as special cases
of a more general setting, namely that of a query algorithm over a semiring. There is a body of research in the interface of databases and semirings, including the study of provenance of database queries using semirings \cite{DBLP:conf/pods/GreenKT07,DBLP:journals/sigmod/KarvounarakisG12}, the study of the query containment problem under  semiring semantics \cite{DBLP:journals/mst/Green11,DBLP:journals/tods/KostylevRS14}, and, more recently, the study of Datalog under semiring semantics \cite{DBLP:conf/pods/Khamis0PSW22}. In these studies, the semirings considered are \emph{positive}, which means that they are commutative semirings with no zero divisors and with the property that the sum of any two non-zero elements is  non-zero.
It is perfectly meaningful to define homomorphism counts over positive semirings and then investigate query algorithms over such semirings. In particular, it may be interesting to investigate query algorithms over the \emph{tropical semiring} $\mathbb R = (R\cup \{\infty\}, \min,+)$, where $R$ is the set of real numbers, since it is well known that various \emph{shortest-distance} problems can be naturally captured using this semiring.

\smallskip

\noindent{\bf Adaptive query algorithms}~ 
The query algorithms $(\class{F},X)$ studied in this paper
are \emph{non-adaptive}, in the sense that the set $\class{F}=\{F_1, \ldots, F_k\}$ is fixed up-front.
In contrast, an \emph{adaptive} query algorithm may 
decide the set of instances $\class{F}$ at run-time, that is to say, the choice of $F_i$ may depend
on the homomorphism-count vector for $F_1, \ldots, F_{i-1}$. 
As pointed out in the Introduction, whenever a class $\class{C}$
admits an adaptive left (right)  query algorithm over $\bool$, then it also admits a non-adaptive left (right) query algorithm over $\bool$. Note that the most ``economical'' (as regards the number of instances used) non-adaptive algorithm for a class 
$\class{C}$ may use a larger 
set $\set{F}$ of instances  than the adaptive one, but the number of instances used by the non-adaptive algorithm is still finite.
Thus, adaptive query algorithms over $\bool$ do not offer higher expressive power than adaptive ones. The situation for $\nat$ is 
quite different:  it was shown in \cite{chen2022algorithms} 
that \emph{every} isomorphism-closed class of instances 
(in particular, every homomorphic-equivalence closed class) 
admits
 an adaptive left $k$-query algorithm over $\nat$ already for $k=3$; therefore, adaptive left query algorithms over $\nat$ have higher expressive power than adaptive left query algorithms over $\bool$, even when it comes to 
 homomorphic-equivalence closed classes.
Switching sides, note that the class of triangle-free graphs (Example~\ref{exam:triangle-free}) does not admit
an adaptive right query algorithm over $\nat$,
as was shown in \cite[Proposition 8.2]{chen2022algorithms}; 
 hence it is a meaningful question to ask: which homomorphic-equivalence closed classes admit an adaptive right query algorithm over $\nat$?
\bibliography{main}
\end{document}